\theoremstyle{definition}
\newtheorem{definition}{Definition}[section]
\theoremstyle{plain}
\newtheorem{theorem}[definition]{Theorem}
\newtheorem{lemma}[definition]{Lemma}
\newtheorem{assumption}[definition]{Assumption}
\numberwithin{equation}{section}
\newcommand{\blue}[1]{\textcolor{black}{#1}}
\numberwithin{equation}{section}
\definecolor{mygreen}{RGB}{28,172,0} 
\definecolor{mylilas}{RGB}{170,55,241}
   \newcommand*\samethanks[1][\value{footnote}]{\footnotemark[#1]}
\newcommand{\RN}[1]{
  \textup{\uppercase\expandafter{\romannumeral#1}}
} 
\author{Francesca Biagini \thanks{Workgroup Financial Mathematics, Department of Mathematics, Ludwig-Maximilians-Universit\"{a}t M\"{u}nchen, Theresienstr. 39, 80333 Munich, Germany.  Emails: biagini@math.lmu.de,  meyer-brandis@math.lmu.de, oberpriller@math.lmu.de.} \and  Andrea Mazzon\thanks{Department of Economics, University of Verona, Via Cantarane 24, 37129 Verona, Italy. E-mail: andrea.mazzon@univr.it} \and Thilo Meyer-Brandis\samethanks[1] \and  Katharina Oberpriller\samethanks[1]   }
\title{Supplement to ``Liquidity based modeling of asset price bubbles via random matching''}
\begin{document}

\maketitle

This is a supplement to the paper  \cite{biagini_mazzon_oberpriller_meyer_brandis_2022}. The supplement is organized as follows. First, we prove {Theorem 3.13 in \cite{biagini_mazzon_oberpriller_meyer_brandis_2022}} which provides the existence of the dynamical system $\mathbb{D}$ introduced in {Definition 3.6 in \cite{biagini_mazzon_oberpriller_meyer_brandis_2022}}. Second, we show some properties of $\mathbb{D}$ which are summarized in {Theorem 3.14 in \cite{biagini_mazzon_oberpriller_meyer_brandis_2022}}.

In the following, we only state the basic setting and refer to \cite{biagini_mazzon_oberpriller_meyer_brandis_2022} for definitions.

\section{Setting}

 Let $(\tilde \Omega, \tilde{\mathcal{F}}, \tilde P)$ be a probability space and $(\hat \Omega, \hat{\mathcal{F}})$ another measurable space. We define the product space  
\begin{equation} \label{eq:ProductSpace}
	(\Omega,  \mathcal{F}):={(\tilde \Omega \times \hat \Omega , \tilde{\mathcal{F}} \otimes \hat{\mathcal{F}})}.
\end{equation} 
Let $\hat{P}$ be a Markov kernel (or stochastic kernel) from $\tilde \Omega$ to $\hat \Omega$. Given $\tilde \omega \in \tilde \Omega,$ we set $\hat P^{\tilde \omega}:=\hat P(\tilde \omega)$ with a slight notational abuse. We then introduce a probability measure $P$ on $(\Omega,  \mathcal{F})$ as the semidirect product of $\tilde P$ and $\hat P$, that is,
\begin{equation}\label{eq:ProbhatPSetting}
P({\tilde A \times \hat{A}}) := (\tilde P \ltimes \hat P)({\tilde A \times \hat{A}}) =  \int_{\tilde A}\hat P^{\tilde \omega}(\hat A)d\tilde P(\tilde \omega).
\end{equation} 
We fix an atomless probability space $(I,\cal{I},\lambda)$ representing the space of agents and let $(I \times \Omega, \cal{I} \boxtimes \cal{F},\lambda \boxtimes P)$ be {a rich} Fubini extension of $(I \times \Omega, \cal{I} \otimes \cal{F}, \lambda \otimes P)$. All agents in ${I}$ can be classified according to their type. In particular, we let $S=\lbrace 1,2,...,K \rbrace$ be a finite space of types and  say that an agent has type $J$ if he is not matched. We denote by $\hat{S}:=S \times (S \cup \lbrace J \rbrace)$ the extended type space. Moreover, we call $\hat{\Delta}$ the space of extended type distributions, which is the set of probability distributions ${p}$ on $\hat{S}$ satisfying ${p}(k,l)={p}(l,k)$ for any $k$ and $l$ in $S$. This space is endowed with the topology $\cal{T}^{\Delta}$ induced by the topology of the space of matrices with $|S|$ rows and $|S| + 1$ columns. We consider $(n)_{n \geq 1}$ time periods and denote by $(\eta^n, \theta^n, \xi^n, \sigma^n, \varsigma^n)$ the matrix valued processes, with $(\eta^n, \theta^n, \xi^n, \sigma^n, \varsigma^n)=(\eta^n_{kl},\theta^n_{kl}, \xi^n_{kl}, \sigma^n_{kl}[r,s], \varsigma^n_{kl}[r])_{k,l,r,s \in S \times S \times S \times S }$ for $n \geq 1$, on $(\Omega, \mathcal{F}, P)$. For a detailed introduction of these processes we refer to {Section 3 in \cite{biagini_mazzon_oberpriller_meyer_brandis_2022}}. Moreover, let $\hat{p}=(\hat{p}^n)_{n \geq 1}$ be a stochastic process on $(\Omega, \mathcal{F},P)$ with values in $\hat{\Delta}$, {representing the  evolution of the underlying extended type distribution}. We assume that $\hat{p}^0$ is deterministic. \\
Given the input processes $(\eta, \theta, \xi,\sigma, \varsigma)$ we denote by $\mathbb{D}$ a dynamical system on $(I \times \Omega, \cal{I} \boxtimes \cal{F},\lambda \boxtimes P)$ and by $\Pi=(\alpha, \pi,g)=(\alpha^n, \pi^n, g^n)_{n \in \mathbb{N} \backslash \lbrace{ 0 \rbrace}}$ the agent-type function, the random matching and the partner-type function, respectively, as introduced in {Definition 3.6 in \cite{biagini_mazzon_oberpriller_meyer_brandis_2022}}, which we recall in the following.

\begin{definition} \label{defi:DynamicalSystemDiscrete}
A \emph{dynamical system} $\mathbb{D}$ defined on $(I \times \Omega, \cal{I} \boxtimes \mathcal{F},\lambda \boxtimes P )$ is a triple $\Pi=(\alpha, \pi,g)=(\alpha^n, \pi^n, g^n)_{n \in \mathbb{N} \backslash \lbrace 0 \rbrace}$ such that for each integer period $n\geq 1$ we have:
\begin{enumerate}
	\item $\alpha^n: I \times {\Omega} \to S$ is the $\cal{I} \boxtimes {\mathcal{F}}$-measurable agent type function. The corresponding end-of-period type of agent $i$ under the realization ${ \omega} \in {\Omega}$ is given by $\alpha^n(i,{\omega}) \in S$.
	\item A random matching $\pi^n: I \times {\Omega} \to I$, describing the end-of-period agent $\pi^n(i)$ to whom agent $i$ is currently matched, if agent $i$ is currently matched. If agent $i$ is not matched, then $\pi^{n}(i)=i$. The associated $\cal{I} \boxtimes {\mathcal{F}}$-measurable partner-type function $g^{n}:I \times {\Omega} \to S \cup \lbrace J \rbrace$ is given by
		\[g^n(i,{\omega})=\begin{cases}
	\alpha^n(\pi^n(i,{\omega}),{\omega}) & \text{ if } \pi^n(i,{\omega}) \neq i\\
		J& \text{ if } \pi^n(i,{\omega}) = i,
	\end{cases}\]
	providing the type of the agent to whom agent $i$ is matched, if agent $i$ is matched, or $J$ if agent $i$ is not matched. 
\end{enumerate}
Let the initial condition $\Pi^0=(\alpha^0,\beta^0)$ of $\mathbb{D}$ be given.  
We now construct a dynamical system $\mathbb{D}$ defined on $(I \times \Omega, \cal{I} \boxtimes \mathcal{F},\lambda \boxtimes P )$ with {input processes} $(\eta^n,\theta^n,\xi^n, \sigma^n, \varsigma^n)_{n\geq 1}$. We assume that $\Pi^{n-1}=(\alpha^{n-1},\pi^{n-1},g^{n-1})$ is given for some $n \geq 1$, and define $\Pi^{n}=(\alpha^n,\pi^n, g^n)$ by characterizing the three sub-steps of random change of types of agents, random matchings, break-ups and possible type changes after matchings and break-ups as follows. \\
\textbf{Mutation:} For $n \geq 1$ consider an $\cal{I} \boxtimes {\mathcal{F}}$-measurable post mutation function $$\bar{\alpha}^n: I \times \Omega \to S.$$ In particular, $\bar{\alpha}_i^n({\omega}):=\bar{\alpha}^n(i,\omega)$ is the type of agent $i$ after the random mutation under the scenario ${\omega} \in {\Omega}$. The type of the agent to whom {an agent} is matched is identified by a $\cal{I} \boxtimes {\mathcal{F}}$-measurable function $$\bar{g}^n:I \times \Omega \to S {\cup \lbrace J \rbrace},$$ given by 
$$
\bar{g}^n(i,{\omega})=\bar{\alpha}^n(\pi^{n-1}(i,{\omega}),{\omega})
$$ 
for any ${\omega} \in {\Omega}$. {In particular, $\bar{g}_i^n({\omega}):=\bar{g}^n(i,\omega)$ is the type of the agent to whom {an agent} is matched  under the scenario ${\omega} \in {\Omega}$.} Given $\hat{p}^{n-1}$ and $\tilde \omega \in \tilde \Omega$, for any $k_1,k_2,l_1$ and $l_2$ in $S$, for any $r \in S \cup \lbrace J \rbrace$, for $\lambda$-almost every agent $i$, we set
\small{
\begin{align}
&\hat P^{\tilde \omega}\left(\bar{\alpha}_i^n{(\tilde{\omega},\cdot)}=k_2, \bar{g}_i^n{(\tilde{\omega},\cdot)}=l_2 \vert \alpha_{i}^{n-1}{(\tilde{\omega},\cdot)}=k_1, g_i^{n-1}{(\tilde{\omega},\cdot)}=l_1, \hat{p}^{n-1} {(\tilde{\omega},\cdot)}\right)\blue{(\hat{\omega})}\nonumber \\&\quad=\eta_{k_1,k_2}\left( \tilde \omega,n,\hat{p}^{n-1}(\tilde{\omega},{\hat{\omega}})\right)\eta_{l_1,l_2}\left( \tilde \omega,n,\hat{p}^{n-1}(\tilde{\omega},{\hat{\omega}})\right), \label{eq:IndiMutation1}
\end{align} 
\begin{align}
&\hat P^{\tilde \omega}\left(\bar{\alpha}_i^n{(\tilde{\omega},\cdot)}=k_2, \bar{g}_i^n{(\tilde{\omega},\cdot)}=r \vert \alpha_{i}^{n-1}{(\tilde{\omega},\cdot)}=k_1, g_i^{n-1}{(\tilde{\omega},\cdot)}=J, \hat{p}^{n-1}{(\tilde{\omega},\cdot)}\right)\blue{(\hat{\omega})} \nonumber \\
&\quad =\eta_{k_1,k_2}\left(\tilde{\omega},n,\hat{p}^{n-1}(\tilde{\omega},{\hat{\omega}})\right)\delta_J(r), \label{eq:IndiMutation2}
\end{align}}
We then set 
$$
\bar{\beta}^n{(\omega)}=(\bar{\alpha}^n{(\omega)}, \bar{g}^n{(\omega)}), \quad n \geq 1.
$$
The post-mutation extended type distribution realized in the state of the world ${\omega \in \Omega}$ is {denoted by $\check{p}(\omega)=(\check{p}^n(\omega)[k,l])_{k \in S,  l \in S \cup J}$, where 
	\begin{equation}\label{eq:PostMutationExtendedType}
	\check{p}^n(\omega)[k,l]:=\lambda(\lbrace i \in I: \bar{\alpha}^n(i,\omega)=k,  \bar{g}^n(i,\omega)=l\rbrace).
	\end{equation}
}
\textbf{Matching:} {We introduce a random matching} $\bar{\pi}^n: I \times { \Omega} \to I$ {and the associated post-matching partner type function $\bar{\bar{g}}^n$ given by
\[\bar{\bar{g}}^n(i,{ \omega})=\begin{cases}
\bar{\alpha}^n(\bar{\pi}^n(i,{ \omega}),{ \omega}) & \text{ if } \bar{\pi}^n(i,{ \omega}) \neq i\\
J& \text{ if }  \bar{\pi}^n(i,{ \omega})=i,
\end{cases}\]
satisfying} the following properties:
\begin{enumerate}
\item $\bar{\bar{g}}^n$ is $\cal{I} \boxtimes {\mathcal{F}}$-measurable.
	\item {For any $\tilde \omega \in \tilde \Omega$, any $k,l \in S$ and any $r \in S \cup \lbrace J \rbrace$, it holds
\begin{equation} \notag
	\hat P^{\tilde \omega}(\bar{\bar{g}}^n{(\tilde{\omega},\cdot)}=r \vert \bar{\alpha}^n_i{(\tilde{\omega},\cdot)}=k, \bar{g}_i^n{(\tilde{\omega},\cdot)}=l)(\hat{\omega})=\delta_l(r).
\end{equation}
This means that
	\begin{equation} \notag
		\bar{\pi}^{n}_{{ \omega}}(i)=\pi_{{ \omega}}^{n-1}(i) \quad \text{ for any } i \in \lbrace i: \pi^{n-1}(i,{ \omega}) \neq i \rbrace.
	\end{equation}}
\item Given $\tilde{\omega}\in \tilde{\Omega}$ and the post-mutation extended type distribution $\check{p}^n$ in \eqref{eq:PostMutationExtendedType}, an unmatched agent of type $k$ is matched to a unmatched agent of type $l$ with conditional probability {$\theta_{kl}(\tilde{\omega},n,\check{p}^n)$}, that is for $\lambda$-almost every agent $i$ and $\hat P^{\tilde{\omega}}$-almost every $\hat{\omega}$, we define
\begin{equation} \label{eq:MatchingCondProb1}
	\hat P^{\tilde \omega}(\bar{\bar{g}}^n{(\tilde{\omega},\cdot)}=l \vert \bar{\alpha}^n_i{(\tilde{\omega},\cdot)}=k, \bar{g}_i^n{(\tilde{\omega},\cdot)}=J, \check{p}^n{(\tilde{\omega},\cdot)})(\hat{\omega})=\theta_{kl}^n(\tilde \omega, \check{p}^{n}(\tilde{\omega},\hat{{\omega}})).
\end{equation}
This also implies that
\begin{equation} \label{eq:MatchingCondProb2}
	\hat P^{\tilde \omega}(\bar{\bar{g}}^n{(\tilde{\omega},\cdot)}=J \vert \bar{\alpha}^n_i{(\tilde{\omega},\cdot)}=k, \bar{g}_i^n{(\tilde{\omega},\cdot)}=J, \check{p}^n{(\tilde{\omega},\cdot)})(\hat{\omega})=1-\sum_{l \in S} \theta_{kl}^n(\tilde{\omega},\check{p}^n(\tilde{\omega},{\hat{\omega}}))=b^k(\tilde{\omega},\check{p}^n(\tilde{\omega},{\hat{\omega}})).
\end{equation}
\end{enumerate}
The extended type of agent $i$ after the random matching step is 
$$\bar{\bar{\beta}}^n_i{(\omega)}=(\bar{\alpha}_i^n{(\omega)},\bar{\bar{g}}_i^n{(\omega)}), \quad n \geq 1.$$ 
{We denote the post-matching extended type distribution realized in ${ \omega} \in {\Omega}$ by $\check{\check{p}}^n(\omega)=(\check{\check{p}}^n(\omega)[k,l])_{k \in S,  l \in S \cup J}$, where 
	\begin{equation}\label{eq:PostMatchingExtendedType}
	\check{\check{p}}^n(\omega)[k,l]:=\lambda(\lbrace i \in I: \bar{\bar{\alpha}}^n(i,\omega)=k,  \bar{g}^n(i,\omega)=l\rbrace).
	\end{equation}}

\textbf{Type changes of matched agents with break-up:} We now define a random matching $\pi^n$ {by}
\begin{equation}
\pi^n(i)=\begin{cases} \label{eq:BreakUpPi}
	\bar{\pi}^n(i) & \text{ if } \bar\pi^n(i) \neq i\\ 
		i& \text{ if } \bar\pi^n(i) = i.
	\end{cases}
\end{equation}
We then introduce an $(\cal{I} \boxtimes {\mathcal{F}})$-measurable agent type function $\alpha^n$ and an $(\cal{I} \boxtimes {\mathcal{F}})$-measurable partner function $g^n$ {with} 
$$
g^n(i,{\omega})=\alpha^n(\pi^n(i,{\omega}), {\omega}), \quad n \geq 1,
$$
for all $(i,{\omega}) \in I \times {\Omega}$. Given $\tilde{\omega} \in \tilde{\Omega}$, $\check{\check{p}}^n \in \hat{\Delta}$, for any $k_1,k_2,l_1,l_2 \in S$ and $r \in S \cup \lbrace J \rbrace$, for $\lambda$-almost every agent $i$, and for $\hat{P}^{\tilde{\omega}}$-almost every $\hat \omega$, we set 
\begin{align}
	\hat P^{\tilde \omega}\left(\alpha_i^n{(\tilde{\omega},\cdot)}=l_1, g_i^n(\tilde{\omega},\cdot)=r \vert \bar{\alpha}_i^n{(\tilde{\omega},\cdot)}=k_1, \bar{\bar{g}}^n_i{(\tilde{\omega},\cdot)}=J\right)\blue{(\hat{\omega})}=\delta_{k_1}(l_1) \delta_J(r), \label{eq:BreakUpCondProb0.1}
\end{align}
\begin{align}
	&\hat P^{\tilde \omega}\left(\alpha_i^n{(\tilde{\omega},\cdot)}=l_1, g_i^n{(\tilde{\omega},\cdot)}=l_2 \vert \bar{\alpha}_i^n{(\tilde{\omega},\cdot)}=k_1, \bar{\bar{g}}^n_i{(\tilde{\omega},\cdot)}=k_2, \check{\check{p}}^n{(\tilde{\omega},\cdot)} \right)\blue{(\hat{\omega})} \nonumber \\ &\quad =\left(1-\xi_{k_1k_2}( \tilde \omega, n, \check{\check{p}}^n(\tilde{\omega}, {\hat{\omega}}))\right) \sigma_{k_1 k_2}[l_1,l_2](\tilde \omega, n, \check{\check{p}}^n(\tilde{\omega},{\hat{\omega}})), \label{eq:BreakUpCondProb1} 
\end{align}
\begin{align}
	&\hat P^{\tilde \omega}\left(\alpha_i^n{(\tilde{\omega},\cdot)}=l_1, g_i^n{(\tilde{\omega},\cdot)}=J \vert \bar{\alpha}_i^n{(\tilde{\omega},\cdot)}=k_1, \bar{\bar{g}}^n_i{(\tilde{\omega},\cdot)}=k_2, \check{\check{p}}^n {(\tilde{\omega},\cdot)} \right)\blue{(\hat{\omega})} \nonumber \\& \quad=\xi_{k_1k_2}(\tilde \omega, n,\check{\check{p}}^n(\tilde{\omega},{\hat{\omega}})) \varsigma_{k_1 k_2}^n[l_1](\tilde \omega, n, \check{\check{p}}^n(\tilde{\omega},{\hat{\omega}})). \label{eq:BreakUpCondProb2}
	\end{align}
	The extended-type function at the end of the period is $$\beta^n{({\omega})}=(\alpha^n{({\omega})},g^n{({\omega})}), \quad n \geq 1.$$ 
\end{definition}	
{We denote the extended type distribution at the end of period $n$ realized in ${ \omega} \in {\Omega}$ by ${\hat{p}}^n(\omega)=({\hat{p}}^n(\omega)[k,l])_{k \in S,  l \in S \cup J}$, where 
	\begin{equation}\label{eq:ExtendedTypeEndOfPeriod}
	\hat{{p}}^n(\omega)[k,l]:=\lambda(\lbrace i \in I: {\alpha}^n(i,\omega)=k,  {g}^n(i,\omega)=l\rbrace).
	\end{equation}}
	
Furthermore, the definition of  Markov conditionally independent (MCI) dynamical system is provided in {Definition 3.8 in \cite{biagini_mazzon_oberpriller_meyer_brandis_2022}}. We work under the following assumption, which is {Assumption 3.9 in \cite{biagini_mazzon_oberpriller_meyer_brandis_2022}}.
\begin{assumption} \label{assum:ExistenceHyperfinite}
	Let $(\tilde{\Omega}, \tilde{\mathcal{F}}, \tilde{P})$ be the probability space introduced. We assume that there exists its corresponding hyperfinite internal probability space, which we denote from now on also by $(\tilde{\Omega}, \tilde{\mathcal{F}}, \tilde{P})$ by a slight notational abuse. 
\end{assumption}

As already pointed out in \cite{biagini_mazzon_oberpriller_meyer_brandis_2022}, the proofs of the results below follow by analogous arguments as in \cite{RandomMatchingDiscrete} which is possible due to the product structure of the space $\Omega$ in \eqref{eq:ProductSpace} and the Markov kernel $P$ in \eqref{eq:ProbhatPSetting}. As in \cite{RandomMatchingDiscrete} we use some concepts and notations from nonstandard analysis. Note here that an object with an upper left star means the transfer of a standard object to the nonstandard universe. For a detailed overview of the necessary tools of nonstandard analysis, we refer to Appendix D.2. in \cite{RandomMatchingDiscrete}.

\section{Proof of {Theorem 3.13 in \cite{biagini_mazzon_oberpriller_meyer_brandis_2022}}}

From now on, we fix the hyperfinite internal space $(\tilde \Omega, \tilde{\mathcal{F}}, \tilde P)$, along with the input functions \linebreak 
$(\eta_{kl},\theta_{kl}, \xi_{kl}, \sigma_{kl}[r,s], \varsigma_{kl}[r])_{k,l,r,s \in S \times S \times S \times S }$  from $\tilde{\Omega} \times \mathbb{N} \times \Delta$ to $[0,1]$ introduced above. Given this framework we prove the existence of a {rich} Fubini extension $(I \times \Omega, \cal{I} \boxtimes \mathcal{F}, \lambda \boxtimes P)$, on which a dynamical system $\mathbb{D}$ described in Definition \ref{defi:DynamicalSystemDiscrete} for such input probabilities is defined. More specifically, we are going to construct the space $\hat{\Omega}$ and the probability measure $\hat{P}$ such that  $\Omega = \tilde{\Omega} \times \hat{\Omega}$ and $P = \tilde P \ltimes \hat P$ is a Markov kernel from $\tilde \Omega$ to $\hat \Omega$. \\
We now present and prove {Theorem 3.13 in \cite{biagini_mazzon_oberpriller_meyer_brandis_2022}}. The proof is based on Proposition 3.12 in \cite{biagini_mazzon_oberpriller_meyer_brandis_2022}, which focuses on the random matching step and shows the existence of a suitable hyperfinite probability space and partial matching, generalizing Lemma 7 in \cite{RandomMatchingDiscrete}.

\begin{theorem} 
Let {Assumption 3.9 in \cite{biagini_mazzon_oberpriller_meyer_brandis_2022}} hold and {$(\eta_{kl},\theta_{kl}, \xi_{kl}, \sigma_{kl}[r,s], \varsigma_{kl}[r])_{k,l,r,s \in S \times S \times S \times S }$ be the input functions from $\tilde{\Omega} \times \mathbb{N} \times \hat\Delta$} defined in {Section 3 in \cite{biagini_mazzon_oberpriller_meyer_brandis_2022}}. Then for any extended type distribution $\ddot{p} \in \hat{\Delta}$ and any deterministic initial condition $\Pi^0=(\alpha^0, \pi^0)$ there exists a {rich} Fubini extension $(I \times \Omega, \cal{I} \boxtimes \mathcal{F}, \lambda \boxtimes P)$ on which a discrete dynamical system $\mathbb{D}=\left(\Pi^{n}\right)_{n=0}^{\infty}$ as in {Definition 3.6 in \cite{biagini_mazzon_oberpriller_meyer_brandis_2022}} can be constructed with {discrete time input processes $(\eta^n,\theta^n,\xi^n, \sigma^n, \varsigma^n)_{n\geq 1}$ coming from $(\eta_{kl},\theta_{kl}, \xi_{kl}, \sigma_{kl}[r,s], \varsigma_{kl}[r])_{k,l,r,s \in S \times S \times S \times S }$ as stated in {Section 2 in \cite{biagini_mazzon_oberpriller_meyer_brandis_2022}}}. In particular, 
$$  \Omega = \tilde{\Omega} \times  \hat{\Omega}, \quad \mathcal{F} = \tilde{\mathcal{F}} \otimes  \hat{\mathcal{F}},\quad  P = \tilde P \ltimes \hat P,$$
where $(\hat \Omega, \hat{\mathcal{F}})$ is a measurable space and $\hat P$ a Markov kernel from $\tilde \Omega$ to $\hat \Omega$. The dynamical system $\mathbb{D}$ is also MCI according to {Definition 3.8 in \cite{biagini_mazzon_oberpriller_meyer_brandis_2022}} and with initial cross-sectional extended type distribution {$\hat{p}^0$ equal to $\ddot{p}^0$ with probability one.} 
\end{theorem}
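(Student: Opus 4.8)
The plan is to carry out a hyperfinite construction in the spirit of \cite{RandomMatchingDiscrete}, adapted to the semidirect-product structure of $(\Omega, \mathcal{F}, P)$. Starting from the hyperfinite internal space $(\tilde\Omega, \tilde{\mathcal{F}}, \tilde P)$ granted by Assumption \ref{assum:ExistenceHyperfinite}, I would take a hyperfinite internal index set of agents $\{1, \dots, M\}$ (with $M$ an unlimited hypernatural) equipped with its internal uniform counting measure, and realize $(I, \mathcal{I}, \lambda)$ and the Fubini extension as the associated Loeb space and hyperfinite Loeb product, respectively; as in \cite{RandomMatchingDiscrete}, this construction yields a \emph{rich} Fubini extension on which the exact law of large numbers is available. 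The idiosyncratic space $\hat\Omega$ is built as a hyperfinite internal product $\hat\Omega = \prod_{n} \hat\Omega_n$, where each factor $\hat\Omega_n$ carries the internal randomization variables needed for the three sub-steps of period $n$; the Markov kernel $\hat P$ is obtained as the corresponding internal transition kernel from $\tilde\Omega$, so that $P = \tilde P \ltimes \hat P$ as required.

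I would then define $\mathbb{D}$ by induction on $n$, assuming $\Pi^{n-1} = (\alpha^{n-1}, \pi^{n-1}, g^{n-1})$ has been constructed, and treating the three sub-steps of Definition \ref{defi:DynamicalSystemDiscrete} in turn. For the \emph{mutation} step I would, conditionally on $\tilde\omega$ and the previous extended type, draw $\bar\alpha^n$ (and hence $\bar g^n$) from internal random variables that are conditionally independent across agents and whose conditional law reproduces the product of $\eta$-probabilities prescribed in \eqref{eq:IndiMutation1}--\eqref{eq:IndiMutation2}; the exact law of large numbers then forces the post-mutation distribution $\check p^n$ of \eqref{eq:PostMutationExtendedType} to be $\hat P^{\tilde\omega}$-almost surely deterministic, given $\tilde\omega$ and $\hat p^{n-1}$. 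For the \emph{break-up and type-change} step I would proceed analogously, using conditionally independent internal variables matching the $\xi$, $\sigma$ and $\varsigma$ probabilities of \eqref{eq:BreakUpCondProb0.1}--\eqref{eq:BreakUpCondProb2}, and again invoke the exact law of large numbers to obtain the deterministic end-of-period distribution $\hat p^n$ of \eqref{eq:ExtendedTypeEndOfPeriod}.

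The \emph{matching} step is the crux and cannot be handled by independent assignment, since matching is symmetric: if $i$ is matched to $j$ then $j$ is matched to $i$, which couples the agents' draws. Here I would invoke Proposition 3.12 in \cite{biagini_mazzon_oberpriller_meyer_brandis_2022} (the analog of Lemma 7 in \cite{RandomMatchingDiscrete}) to produce, conditionally on $\tilde\omega$ and the post-mutation distribution $\check p^n$, an internal random matching $\bar\pi^n$ together with the post-matching partner-type function $\bar{\bar g}^n$ satisfying the conditional matching probabilities $\theta$ of \eqref{eq:MatchingCondProb1}--\eqref{eq:MatchingCondProb2}, while keeping already-matched pairs intact as required by property (2). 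This is the main obstacle, because the partial matching must simultaneously respect the prescribed type-dependent matching intensities, preserve symmetry and conditional independence across distinct pairs, and depend measurably on $\tilde\omega$ so as to remain compatible with the kernel $\hat P$.

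Finally, I would assemble the standard objects by passing to Loeb measures and taking standard parts of the internal functions, obtaining the measurable triple $\Pi = (\alpha, \pi, g)$ on the Fubini extension. Markov conditional independence in the sense of Definition 3.8 in \cite{biagini_mazzon_oberpriller_meyer_brandis_2022} follows from the conditional independence of the idiosyncratic randomizations in each sub-step, together with the fact that every transition depends on the past only through the current extended type and the cross-sectional distribution; the semidirect structure $P = \tilde P \ltimes \hat P$ is preserved throughout because all randomizations are performed conditionally on $\tilde\omega$. The initial condition is imposed directly by setting $\hat p^0 = \ddot p^0$, which is deterministic by hypothesis.
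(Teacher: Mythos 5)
Your proposal takes essentially the same route as the paper's proof: a period-by-period hyperfinite construction with one internal randomization space per sub-step, agent-wise internal product measures (conditionally on $\tilde\omega$) for the mutation step, Proposition 3.12 in \cite{biagini_mazzon_oberpriller_meyer_brandis_2022} for the matching step, passage to the Loeb space of the internal product to obtain the rich Fubini extension, and Markov conditional independence inherited from the conditional independence of the randomizations, exactly as in the paper. The one imprecision is that in the break-up step the internal variables cannot be independent across agents but must be drawn jointly per matched pair (the paper uses pair-level measures $\tau_{ij}$ on $(S\times\{0,1\})\times(S\times\{0,1\})$, together with singleton measures $\tau_i$ for unmatched agents), since consistency of $g^n(i)=\alpha^n(\pi^n(i))$ couples the two partners' draws exactly as in the matching step you flagged; this fits within your framework and does not change the argument.
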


\begin{proof}
	At each time period we construct three internal measurable spaces with internal transition probabilities taking into account the following steps:
	\begin{enumerate}
		\item random mutation
		\item random matching
		\item random type changing with break-up.
	\end{enumerate}  
	Let $M$ be a limited hyperfinite number in $\leftidx{^*}{\mathbb{N}}_{\infty}$. Let $\lbrace n \rbrace_{n=0}^M$ be the hyperfinite discrete time line and $(I, \cal{I}_0, \lambda_0)$ the agent space, where $I=\lbrace 1,..., \hat{M} \rbrace$, $\cal{I}_0$ is the internal power set on $I$, $\lambda_0$ is the internal counting probability measure on $\cal{I}_0$, and $\hat{M}$ is an unlimited hyperfinite number in $\leftidx{^*}{\mathbb{N}}_{\infty}$. 

We start  by transferring the deterministic functions{\footnote{{Note that at initial time, the functions are supposed to be deterministic and in particular independent of $\tilde{\Omega}$.}} $\eta(0,\cdot),\theta(0,\cdot), \xi(0,\cdot), \sigma(0,\cdot), \varsigma(0,\cdot): \hat{\Delta} \to [0,1]$} to the nonstandard universe. In particular, we denote by $\leftidx{^*}{\theta}^0_{kl}$ for any $k,l \in S$ and by $\leftidx{^*}{f}^0$ for $f=\eta,\xi, \sigma, \varsigma$ the internal functions from $\leftidx{^*}{\hat{\Delta}}$ to $[0,1]$.  We also let $\hat{\theta}^{0}_{kl}(\hat{\rho})=\leftidx{^*}{\hat{\theta}}^{0}_{kl}(\hat{\rho})$ and $\hat{b}_k^0=1-\sum_{l \in S} \hat{\theta}_{kl}^0(\hat{\rho})$ for any $k,l \in S$ and $\hat{\rho} \in \leftidx{^*}{\hat{\Delta}}$, with $1 \in \leftidx{^*}{\mathbb{N}}$.

 We start at $n=0$. To do so, we introduce the trivial probability space over the single set $\lbrace 0 \rbrace$ denoted by $(\bar{\Omega}_0,\bar{\mathcal{F}}_0, \bar{Q}_0)$. Let $\lbrace A_{kl} \rbrace_{(k,l)\in \hat{S}}$ be an internal partition of $I$ such that $\frac{\vert A_{kl} \vert}{\hat{M}} \simeq \ddot{p}_{kl}$ for any $k \in S$ and $l \in S \cup \lbrace J \rbrace$, such that $ \vert A_{kk} \vert$ is even for any $k,l \in S$ and $\vert A_{kl} \vert = \vert A_{lk} \vert$ for any $k,l \in S$. Let $\alpha^0$ be an internal function from $(I, \cal{I}_0, \lambda_0)$ to $S$ such that $\alpha^{0}(i)=k$ if $i \in \bigcup_{l \in S \cup \lbrace J \rbrace} A_{kl}$. Let $\pi^0$ be an internal partial matching from $I$ to $I$ such that $\pi^0(i)=i$ on $\bigcup_{k \in S} A_{kJ}$, and the restriction $\pi^0 \vert_{A_{kl}}$ is an internal bijection from $A_{kl}$ to $A_{lk}$ for any $k,l \in S$. Let 
\[g^0(i)=\begin{cases}
\alpha^0(\pi^0(i)) & \text{ if } \pi^0(i) \neq i\\
J& \text{ if } \pi^0(i) = i.
\end{cases}\]
It is clear that $\lambda_0(\lbrace i: \alpha^0(i)=k, g^0(i)=l \rbrace) \simeq \ddot{p}^0_{kl}$ for any $k \in S$ and $l \in S \cup \lbrace J \rbrace$. \\
Let $(\tilde{\Omega}, \tilde{\mathcal{F}}, \tilde{P})$ be the hyperfinite internal space. Since the intensities are supposed to be deterministic at initial time, the Markov kernel from $\tilde{\Omega}$ is trivial and we define the initial internal product probability space as
 \begin{equation*}
	(\Omega_0, \mathcal{F}_0, Q_0):=(\tilde{\Omega} \times \bar{\Omega}_0, \tilde{\mathcal{F}} \otimes \bar{\mathcal{F}_0}, \tilde{P}\otimes \bar{Q}_0).
\end{equation*}
Suppose now that the dynamical system $\mathbb{D}$ has been constructed up to time $n-1 \in \leftidx{^*}{\mathbb{N}}$ for $n \geq 1$, i.e., that the sequences $\lbrace (\Omega_m, \mathcal{F}_m, {Q}_m) \rbrace_{m=0}^{3n-3}$ and $\lbrace \alpha^l, \pi^l \rbrace_{l=0}^{n-1}$ have been constructed. In particular, we assume to have introduced the spaces $(\hat{\Omega}_m, \hat{\mathcal{F}}_m)$ and the Markov kernel $\hat P_m$ from $\tilde \Omega$ to $\hat \Omega_m$ for any $m=1, \dots, n-3$, so that we can define $\Omega_m:=\tilde \Omega \times \hat{\Omega}_m$ as a hyperfinite internal set with internal power set $\mathcal{F}_m:=\tilde{\mathcal{F}} \otimes \hat{\mathcal{F}}_m$ and $Q_m:= \tilde P \ltimes \hat P _m$ as an internal transition probability from $\Omega^{m-1}$ to $(\Omega_m, \mathcal{F}_m)$, where
\begin{equation} \label{eq:DefiHatOmega}
	\Omega^m := \tilde{\Omega} \times \hat{\Omega}^m, \quad  \hat{\Omega}^m:=\bar{\Omega}_0 \times \prod_{j=1}^{m}  \hat \Omega_j,  \quad  \hat{\mathcal{F}}^m:=\bar{\mathcal{F}}_0 \otimes \left( \otimes_{j=1}^{m} \hat{\mathcal{F}}_j \right) \quad  \text{ and } \quad \mathcal{F}^m = \tilde{\mathcal{F}} \otimes \hat{\mathcal{F}}^m.
\end{equation}         
  
 In this setting, $\alpha^{l}$ is an internal type function from $I \times {\Omega}^{3l-1}$ to the space $S$, and $\pi^l$ an internal random matching from $I \times {\Omega}^{3l}$ to $I$,
such that
 $$
 \alpha^l(i,(\tilde{\omega}, \hat{\omega}^{3l-1}))= \alpha^l(i,\hat{\omega}^{3l-1}), \quad \text{for any } (\tilde{\omega}, \hat{\omega}^{3l-1}) \in \Omega^{3l-1}
 $$ 
 and
 $$
 \pi^l(i,(\tilde{\omega}, \hat{\omega}^{3l}))=\pi^l(i, \hat{\omega}^{3l}), \quad \text{for any } (\tilde{\omega}, \hat{\omega}^{3l}) \in \Omega^{3l}. 
 $$
 Given $\omega^{3l} \in \Omega^{3l}$ we denote by $\pi^l_{\hat{\omega}^{3l}}:I \to I$ the function given by
 $$
 \pi^l_{\hat{\omega}^{3l}}(i):=\pi^l(i,(\tilde{\omega}, \hat{\omega}^{3l}))=\pi^l(i, \hat{\omega}^{3l}).
 $$ A similar notation will be used for $\alpha^l_{\hat{\omega}^{3l}}:I \to S$. We now have the following.  
 
\textbf{(i) Random mutation step:}

We let $\hat \Omega_{3n-2}:=S^I$, which is the space of all internal functions from $I$ to $S$, and denote its internal power set by $\hat{\mathcal{F}}_{3n-2}$. For each $i \in I$ and $\omega^{3n-3}=(\tilde \omega, \hat \omega^{3n-3}) \in \Omega^{3n-3}$, if $\alpha^{n-1}(i, {\omega}^{3n-3})=\alpha^{n-1}\left(i, \hat{\omega}^{3n-3}\right)=k$, define a probability measure $\gamma_{i}^{\tilde \omega, \hat \omega^{3n-3}}$ on $S$ by letting $\gamma_{i}^{\tilde \omega, \hat \omega^{3n-3}}(l):=\theta_{kl}({\tilde{\omega}},n,\hat{\rho}^{n-1}_{\hat{\omega}^{3n-3}})$ for each $l \in S$ with
{
$$
\hat{\rho}^{n-1}_{\hat{\omega}^{3n-3}}[k,r]:=\lambda_0(\lbrace i \in I:\alpha^n_{\hat{\omega}^{3n-3}}(i)=k,  \alpha^n_{\hat{\omega}^{3n-3}}\left(\pi^n_{\hat{\omega}^{3n-3}}(i)\right)=r\rbrace), \quad k,r \in S
$$
and
$$
\hat{\rho}^{n-1}_{\hat{\omega}^{3n-3}}[k,J]:=\lambda_0(\lbrace i \in I:\alpha^n_{\hat{\omega}^{3n-3}}(i)=k, \pi^n_{\hat{\omega}^{3n-3}}(i)=i\rbrace), \quad k \in S.
$$
}

Define a Markov kernel $\hat P^{\hat \omega^{3n-3}}_{3n-2}$ from ${\tilde{\Omega}}$ to $\hat \Omega_{3n-2}$ by letting $\hat P^{\hat\omega^{3n-3}}_{3n-2}(\tilde \omega)$  be the internal product measure $\prod_{i \in I} \gamma_{i}^{\tilde \omega, \hat \omega^{3n-3}}$. Define $\bar{\alpha}^n:\left( I \times {\Omega}^{3n-2} \right) \to S$ by 
$$
\bar{\alpha}^n(i, (\tilde{\omega},\hat{\omega}^{3n-2})):=\bar{\alpha}^n(i, \hat{\omega}^{3n-2})=:\hat{\omega}_{3n-2}(i)
$$ 
{and $\bar{g}^n: \left( I \times  \Omega^{3n-2}\right) \to S \cup \lbrace J \rbrace$ by
\small{
\[\bar{g}^n(i, (\tilde{\omega},\hat{\omega}^{3n-2})):=\bar{g}^n(i, \hat{\omega}^{3n-2}):=\begin{cases}
\bar{\alpha}^n(\pi^{n-1}(i,\hat{\omega}^{3n-3}),\hat{\omega}^{3n-2}) & \text{ if } \pi^{n-1}(i,{ \hat{\omega}^{3n-3}}) \neq i\\
J& \text{ if }  \pi^{n-1}(i, \hat{\omega}^{3n-3})) = i.
\end{cases}\]}
}
Moreover, we introduce the notation
$$\bar{\alpha}_{\hat{\omega}^{3n-2}}^n(\cdot):I \to S, \quad  \bar{\alpha}_{\hat{\omega}^{3n-2}}^n(i):=\bar{\alpha}^n(i, (\tilde{\omega},\hat{\omega}^{3n-2})){:=}\bar{\alpha}^n(i, \hat{\omega}^{3n-2})$$
 for the type function. We then define $\pi_{\hat{\omega}^{3n-3}}^{n-1}(\cdot): I \to I$ and $g^n_{\hat{\omega}^{3n-2}}:I \to S \cup \lbrace J \rbrace$ analogously. 
Finally, we define the cross-internal extended type distribution after random mutation $\check{\rho}^n_{\hat{\omega}^{3n-2}}$ by
$$
\check{\rho}^n_{\hat{\omega}^{3n-2}}[k, l] := \lambda_0(\lbrace \in I: \bar{\alpha}^n_{\hat{\omega}^{3n-2}}(i) = k, \bar{g}^n_{\hat{\omega}^{3n-2}}(i)=l \rbrace), \quad k, l \in S.
$$  
\textbf{(ii) Directed random matching:}\\
Let $(\hat \Omega_{3n-1}, \hat{\mathcal{F}}_{3n-1})$ and $\hat P^{\hat\omega^{3n-2}}_{3n-1}$ be the measurable space and the Markov kernel, respectively, provided by {Proposition 3.12 in \cite{biagini_mazzon_oberpriller_meyer_brandis_2022}}, with type function $\bar{\alpha}_{\hat{\omega}^{3n-2}}^n(\cdot)$ and partial matching function $\pi_{\hat{\omega}^{3n-3}}^{n-1}(\cdot)$, {for fixed matching probability function $\theta\left(\cdot, n, \check{\rho}^{n}_{\hat{\omega}^{3n-2}}\right)$}. 
{Proposition 3.12 in \cite{biagini_mazzon_oberpriller_meyer_brandis_2022}} also provides the directed random matching
\begin{equation*}
	\pi_{\theta^n\left(\cdot, \check{\rho}^{n}_{\hat{\omega}^{3n-2}}\right),\bar{\alpha}^n_{\hat{\omega}^{3n-2}}, \pi^{n-1}_{\hat{\omega}^{3n-3}}},
\end{equation*}
which is a function defined on $(\Omega_{3n-1},\mathcal{F}_{3n-1})$ {by}
$$
\pi_{\theta^n\left(\cdot, \check{\rho}^{n}_{\hat{\omega}^{3n-2}}\right),\bar{\alpha}^n_{\hat{\omega}^{3n-2}}, \pi^{n-1}_{\hat{\omega}^{3n-3}}}(i,(\tilde{\omega},\hat{\omega}_{3n-1})){:=}\pi_{\theta^n\left(\cdot, \check{\rho}^{n}_{\hat{\omega}^{3n-2}}\right),\bar{\alpha}^n_{\hat{\omega}^{3n-2}}, \pi^{n-1}_{\hat{\omega}^{3n-3}}}(i,\hat{\omega}_{3n-1}).
$$
We then define $\bar{\pi}^n: \left( I \times \Omega^{3n-1}\right) \to I$ by 
\begin{equation*}
\bar{\pi}^n(i,(\tilde{\omega},\hat{\omega}_{3n-1})):=\bar{\pi}^n(i,\hat{\omega}^{3n-1}):=\pi_{\theta^n\left(\cdot,\check{\rho}^{n}_{\hat{\omega}^{3n-2}}\right),\bar{\alpha}^n_{\hat{\omega}^{3n-2}}, \pi^{n-1}_{\hat{\omega}^{3n-3}}}(i, \hat \omega_{3n-1})	
\end{equation*}
and
\[\bar{\bar{g}}^n(i, (\tilde{\omega},\hat{\omega}^{3n-1}))=\bar{\bar{g}}^n(i, \hat{\omega}^{3n-1}):=\begin{cases}
\bar{\alpha}^n(\bar{\pi}^n(i, \hat{\omega}^{3n-1}), \hat{\omega}^{3n-2}) & \text{ if } \bar{\pi}^{n}(i, \hat{\omega}^{3n-1})\neq i  \\
J& \text{ if } \bar{\pi}^{n}(i, \hat{\omega}^{3n-1})= i.\end{cases}\]
{Define now  the cross-internal extended type distribution after the random matching $\check{\check{\rho}}^n_{\hat{\omega}^{3n-1}}$ by
$$
\check{\check{\rho}}^n_{\hat{\omega}^{3n-1}}[k,l]:=\lambda_0(\lbrace \in I: \bar{\alpha}^{n}_{\hat{\omega}^{3n-1}}(i)=k, \bar{\bar{g}}^n_{\hat{\omega}^{3n-1}}(i)=l \rbrace).
$$} 

\textbf{(iii) Random type changing with break-up for matched agents:}

Introduce $\hat\Omega_{3n}:=(S \times \lbrace0,1 \rbrace)^{I}$ with internal power set $\hat{\mathcal{F}}_{3n}$, where $0$ represents ``unmatched'' and $1$ represents ``paired''; each point ${\hat\omega_{3n}=(\hat\omega_{3n}^1, \hat\omega_{3n}^2) \in \hat\Omega_{3n}}$ represents an internal function from $I$ to $S \times \lbrace 0,1 \rbrace$. Define a new type function $\alpha^{n}:(I \times \Omega^{3n}) \to S$ by letting $\alpha^n(i,(\tilde{\omega},\hat{\omega}^{3n})):=\alpha^n(i,\hat{\omega}^{3n})=\hat\omega_{3n}^1(i)$. Fix $(\tilde \omega,\hat\omega^{3n-1}) \in \Omega^{3n-1}$. For each $i \in I$, we proceed in the following way.
\begin{enumerate}
	\item If $\bar{\pi}^n(i, \hat{\omega}^{3n-1})=i$ ($i$ is not paired after the matching step at time $n$), let $\tau_i^{\tilde\omega, \hat{\omega}^{3n-1}}$ be the 
probability measure on the type space $S \times \lbrace 0,1 \rbrace$ that gives probability one to the type  $(\bar{\alpha}^n(i,(\tilde{\omega},\hat{\omega}^{3n-2})),0)=(\bar{\alpha}^n(i,\hat{\omega}^{3n-2}),0)$ and zero to the rest
\item If $\bar{\pi}^n(i, (\tilde{\omega},\hat{\omega}^{3n-1}))=\bar{\pi}^n(i, \hat{\omega}^{3n-1})=j\neq i$ ($i$ is paired after the matching step at time $n$), $\bar{\alpha}^n(i,(\tilde{\omega},\hat{\omega}^{3n-2}))=\bar{\alpha}^n(i,\hat{\omega}^{3n-2})=k,  \bar{\pi}^n(i, (\tilde{\omega},\hat{\omega}^{3n-1}))=\bar{\pi}^n(i, \hat{\omega}^{3n-1})=j$ and $\bar{\alpha}^n(j,(\tilde{\omega},\hat{\omega}^{3n-1}))=\bar{\alpha}^n(j,\hat{\omega}^{3n-1})=l$, define a probability measure $\tau_{ij}^{\tilde\omega, \hat{\omega}^{3n-1}}$ on $(S \times \lbrace 0,1 \rbrace) \times(S \times \lbrace 0,1 \rbrace)$ as
\small{
\begin{align*}
	&\tau_{ij}^{\tilde\omega, \hat{\omega}^{3n-1}}((k',0), (l',0))	:=\left(1-\xi_{kl}({\tilde{\omega}},n,\check{\check{\rho}}^n_{\hat{\omega}^{3n-1}})\right)\varsigma_{kl}[k']\left({\tilde{\omega}},n,\check{\check{\rho}}^n_{\hat{\omega}^{3n-1}}\right)\varsigma_{lk}[l']\left({\tilde{\omega}},n,\check{\check{\rho}}^n_{\hat{\omega}^{3n-1}}\right)
\end{align*}}
and \small{
\begin{align*}
	\tau_{ij}^{\tilde\omega, \hat{\omega}^{3n-1}}\left((k',1), (l',1)\right)
	&:=\xi_{kl}\left({\tilde{\omega}},n,\check{\check{\rho}}^n_{\hat{\omega}^{3n-1}}\right)\sigma_{kl}[k',l']\left({\tilde{\omega}},n,\check{\check{\rho}}^n_{\hat{\omega}^{3n-1}}\right)
\end{align*}}
for $k',l' \in S$, and zero for the rest.
\end{enumerate}

Let $A^{n}_{\hat\omega^{3n-1}}=\lbrace (i,j) \in I \times I: i < j,  \bar{\pi}^n(i,(\tilde{\omega},\hat{\omega}^{3n-1}))=\bar{\pi}^n(i,\hat{\omega}^{3n-1})=j \rbrace$ and $B^{n}_{\hat\omega^{3n-1}}=\lbrace i \in I:\bar{\pi}^n(i,(\tilde{\omega},\hat{\omega}^{3n-1}))= \bar{\pi}^n(i,\hat{\omega}^{3n-1})=i \rbrace.$ {Define a Markov kernel $\hat P_{3n}^{\hat \omega^{3n-1}}$ from $\tilde \Omega$ to $\hat \Omega ^{3n}$ by 
\begin{equation*}
	\hat P_{3n}^{\hat \omega^{3n-1}}(\tilde \omega):=\prod_{i \in B^{n}_{\tilde\omega, \hat{\omega}^{3n-1}}} \tau_i^{\tilde\omega, \hat{\omega}^{3n-1}} \otimes \prod_{(i,j) \in A^n_{\hat{\omega}^{3n-1}}} \tau_{ij}^{\tilde \omega, \hat{\omega}^{3n-1}}.
\end{equation*}
Let 
\begin{align*}
\pi^n(i,(\tilde{\omega},\hat{\omega}^{3n}))&={\pi}^n(i, \hat{\omega}^{3n})
\\&:=\begin{cases}
J & \text{ if } \bar{\pi}^{n}(i, \hat{\omega}^{3n-1})=J \text{ or } \blue{\hat{\omega}}_{{3n}}^2(i)=0 \text{ or } \blue{\hat{\omega}}_{{3n}}^2(\bar{\pi}^{n}(i, \hat{\omega}^{3n-1}))=0 \\
\bar{\pi}^{n}(i, \hat{\omega}^{3n-1})& \text{ otherwise,}\end{cases}
\end{align*}
and
\[g^n(i,(\tilde{\omega},\hat{\omega}^{3n}))={g}^n(i, \hat{\omega}^{3n}):=\begin{cases}
\alpha^n(\pi^n(i, \hat{\omega}^{3n}), \hat{\omega}^{3n}) & \text{ if } {\pi}^{n}(i, \hat{\omega}^{3n})\neq i \\
J& \text{ if } {\pi}^{n}(i, \hat{\omega}^{3n})= i. \end{cases}\]
Define $\hat{\rho}_{\hat{\omega}^{3n}}^n=\lambda_0(\alpha^n_{\hat{\omega}^{3n}}, \pi^n_{\hat{\omega}^{3n}} )^{-1}$. \\

By repeating this procedure, we construct a hyperfinite sequence of internal transition probability spaces $\lbrace(\Omega_m, \mathcal{F}_m,Q_m)\rbrace_{m=0}^{3M}$ and a hyperfinite sequence of internal type functions and internal random matchings $\lbrace (\alpha^n, \pi^n) \rbrace_{n=0}^M$. Moreover, define $(\Omega^m,\mathcal{F}^m)$ as in \eqref{eq:DefiHatOmega}, and
$$
\hat P^m := \prod_{i=1}^m \hat P_i , \qquad Q^m:=\tilde P \ltimes \hat P^m,
$$
where the product of the Markov kernels is $\tilde \omega$-wise.
}

Let $(I \times \Omega^{3M}, \cal{I}_0 \otimes \mathcal{F}^{3M}, \lambda_0 \otimes Q^{3M})$ be the internal product probability space of $(I, \cal{I}_0, \lambda_0)$ and $(\Omega^{3M}, \mathcal{F}^{3M}, Q^{3M}).$ Denote the Loeb spaces of $(\Omega^{3M}, \mathcal{F}^{3M}, Q^{3M})$ and the internal product $(I \times \Omega^{3M}, \cal{I}_0 \otimes \mathcal{F}^{3M}, \lambda_0 \otimes Q^{3M})$ by $(\Omega^{3M}, \mathcal{F}, P)$ and $(I \times \Omega^{3M}, \cal{I} \boxtimes \mathcal{F}, \lambda \boxtimes P),$ respectively. For simplicity, let $\Omega^{3M}$ be denoted by $\Omega$ and $\hat \Omega^{3M}$ by $\hat \Omega$. Denote now $Q^{3M}$ by $P$ and the Markov kernel $\hat P^{3M}$ by $\hat P$.

The properties of a dynamical system as well as the independence conditions follow now by applying similar arguments as in the proof of Theorem 5 in \cite{RandomMatchingDiscrete} for any fixed $\tilde \omega \in \tilde \Omega$. The only difference is that in our setting the {input processes} for the random mutation step and the break-up step also depend on the extended type distribution. Furthermore, these arguments are similar to the ones in the proof of Lemma \ref{lemma:Auxiliary1} and can be found there with all details. 
\end{proof}

\section{Proof of {Theorem 3.14 in \cite{biagini_mazzon_oberpriller_meyer_brandis_2022}}}

We now prove {Theorem 3.14 in \cite{biagini_mazzon_oberpriller_meyer_brandis_2022}} which is a generalization of the results in Appendix C in \cite{RandomMatchingDiscrete}. For $n \geq 1$ we define the mapping $\Gamma^n$ from ${\tilde{\Omega}} \times \hat{\Delta} $ to $ \hat{\Delta} $ by
\begin{align}
	\Gamma^n_{kl}({\tilde{\omega}},\hat{p}) &=\sum_{k_1, l_1 \in S} (1-\xi_{k_1 l_1}({\tilde{\omega}},n,\tilde{\tilde{p}}^n)) \sigma_{k_1 l_1}[k,l]\left({\tilde{\omega}},n,\tilde{\tilde{p}}^n\right) \tilde{p}^n_{k_1l_1} \nonumber  \\
	&\quad+\sum_{k_1, l_1 \in S} (1-\xi_{k_1 l_1}({\tilde{\omega}},n,\tilde{\tilde{p}}^n)) \sigma_{k_1 l_1}[k,l]\left({\tilde{\omega}},n,\tilde{\tilde{p}}^n\right) {\theta}_{k_1l_1}({\tilde{\omega}},n,\tilde{p}^n)\tilde{p}^n_{k_1J}, \label{eq:DefinitionGamma}
\end{align}
and 
{
\begin{align}
\Gamma^n_{kJ}({\tilde{\omega}},\hat{p})&=  b_{k}(\tilde{\omega},n,\tilde{p}^n)\tilde{p}_{kJ}^n + \sum_{k_1,l_1 \in S} \xi_{k_1l_1}(\tilde{\omega},n,\tilde{\tilde{p}}^n) \varsigma_{k_1l_1}[k](\tilde{\omega}, \tilde{\tilde{p}}^n)\tilde{p}^n_{k_1l_1} \nonumber \\
&+\sum_{k_1, l_1 \in S} \xi_{k_1 l_1  }(\tilde{\omega},n, \tilde{\tilde{p}}^n) \varsigma_{k_1l_1}[k](\tilde{\omega},n,\tilde{\tilde{p}}^n) \theta_{k_1 l_1}(\tilde{\omega},n,\tilde{p}^n)\tilde{p}_{k_1J}^n \label{eq:x}
\end{align}}
with
\begin{align*}
	\tilde{p}_{kl}^n&= \sum_{k_1,l_1 \in S } \eta_{k_1 k}({\tilde{\omega}},n,\hat{p}) \eta_{l_1 l}({\tilde{\omega}},n,\hat{p})\hat{p}_{k_1l_1} \\
	\tilde{p}_{kJ}^n&=\sum_{l \in S} \hat{p}_{lJ} \eta_{lk}({\tilde{\omega}},n,\hat{p})
\end{align*}
and 
\begin{align*}
	\tilde{\tilde{p}}_{kl}^n&=\tilde{p}^n_{kl}+\theta_{kl}({\tilde{\omega}},n,\tilde{p}^n)\tilde{p}^n_{kJ} \\
	\tilde{\tilde{p}}_{kJ}^n&=b_k({\tilde{\omega}},n,\tilde{p}^n) \tilde{p}^n_{kJ}.
\end{align*}

{Theorem 3.14 in \cite{biagini_mazzon_oberpriller_meyer_brandis_2022}} is proven with the help of the following lemmas. 
\begin{lemma} \label{lemma:IndependenceRandomProcesses}
	Assume that the {discrete} dynamical system {$\mathbb{D}$} defined in {Definition 3.6 in \cite{biagini_mazzon_oberpriller_meyer_brandis_2022}} is Markov conditionally independent given {$\tilde{\omega}$} as defined in {Definition 3.8 in \cite{biagini_mazzon_oberpriller_meyer_brandis_2022}}. Then given $\tilde{\omega} \in \tilde{\Omega}$, the discrete time processes $\lbrace \beta_i^n \rbrace_{n=0}^{\infty}, i \in I,$ are essentially pairwise independent on $(I \times \hat{\Omega}, \cal{I} \boxtimes \hat{\mathcal{F}}, \lambda \boxtimes \hat{P}^{\tilde{\omega}})$. Moreover, for fixed $n=1,...,M$ also $(\bar{\beta}_i^n)_{n=0}^{\infty}$ and $(\bar{\bar{\beta}}_i^n)_{n=0}^{\infty}, i \in I$, are essentially pairwise independent on $(I \times \hat{\Omega}, \cal{I} \boxtimes \hat{\mathcal{F}}, \lambda \boxtimes \hat{P}^{\tilde{\omega}})$. 
	\end{lemma}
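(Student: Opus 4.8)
The plan is to adapt the proof of the corresponding statement in Appendix C of \cite{RandomMatchingDiscrete}, carried out $\tilde{\omega}$-wise on the Markov kernel $\hat{P}^{\tilde{\omega}}$. Fix $\tilde{\omega} \in \tilde{\Omega}$ and recall from the construction in the previous theorem that, at each of the three substeps of a period, the internal transition kernel is a product measure across agents: the mutation kernel is $\prod_{i \in I} \gamma_i^{\tilde{\omega}, \hat{\omega}^{3n-3}}$, the break-up kernel is $\prod_{i \in B^n} \tau_i^{\tilde{\omega},\hat{\omega}^{3n-1}} \otimes \prod_{(i,j) \in A^n} \tau_{ij}^{\tilde{\omega},\hat{\omega}^{3n-1}}$, and the matching substep is governed by {Proposition 3.12 in \cite{biagini_mazzon_oberpriller_meyer_brandis_2022}}. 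Combined with the Markov conditional independence hypothesis, this yields that, conditionally on the internal history up to the start of period $n$, the single-agent transitions $\beta_i^{n-1} \mapsto \beta_i^n$ are internally independent across agents (and across matched pairs in the matching substep), each depending on the past only through the agent's own current extended type and through the cross-sectional extended type distributions $\hat{\rho}^{n-1}$, $\check{\rho}^n$, $\check{\check{\rho}}^n$.

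I would then proceed by induction on $n$. The base case $n=0$ is immediate from the deterministic initial condition $\Pi^0$. Assume that $\{\beta_i^m\}_{m=0}^{n-1}$ is essentially pairwise independent on $(I \times \hat{\Omega}, \mathcal{I} \boxtimes \hat{\mathcal{F}}, \lambda \boxtimes \hat{P}^{\tilde{\omega}})$. For the inductive step I would fix a pair $(i,j)$ outside a suitable null set and show that the joint law of $\big((\beta_i^m)_{m \le n}, (\beta_j^m)_{m \le n}\big)$ factorizes. The decisive point is that the cross-sectional distributions on which the kernels depend are, by the exact law of large numbers for Fubini extensions, $\hat{P}^{\tilde{\omega}}$-almost surely equal to the deterministic quantities $\hat{p}^{n-1}$, $\check{p}^n$, $\check{\check{p}}^n$. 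Hence conditioning on them amounts to conditioning on constants, so the conditional cross-agent independence upgrades to genuine pairwise independence of the increments, which combines with the inductive hypothesis to give essential pairwise independence of $\{\beta_i^m\}_{m=0}^{n}$.

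The passage from internal independence to \emph{essential} pairwise independence on the Loeb space is handled, as in \cite{RandomMatchingDiscrete}, by the Keisler--Fubini property of the rich Fubini extension: internal independence of the hyperfinite family, together with the fact that $\lambda \boxtimes \hat{P}^{\tilde{\omega}}$ is a Fubini extension of $\lambda \otimes \hat{P}^{\tilde{\omega}}$, yields independence of $\beta_i$ and $\beta_j$ for $\lambda \otimes \lambda$-almost every pair $(i,j)$. The assertions for the intermediate processes $(\bar{\beta}_i^n)$ and $(\bar{\bar{\beta}}_i^n)$ follow by the same scheme, reading the required conditional independence off the mutation product kernel and off {Proposition 3.12 in \cite{biagini_mazzon_oberpriller_meyer_brandis_2022}}, respectively.

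The main obstacle is the dependence of every transition kernel on the random cross-sectional distributions, which couples all agents simultaneously and is a priori a source of correlation between the individual processes. The whole argument rests on neutralizing this coupling by showing that the empirical distributions concentrate on deterministic values $\hat{P}^{\tilde{\omega}}$-almost surely; verifying the hypotheses of the exact law of large numbers at each substep is where the real work lies. A secondary subtlety is the matching substep, where two matched partners are genuinely dependent; this is reconciled with pairwise independence by the fact that any fixed pair $(i,j)$ is matched to each other with negligible probability, so that for $\lambda \otimes \lambda$-almost every pair the two post-matching extended types remain independent.
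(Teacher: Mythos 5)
There is a genuine gap: your argument proves the statement only for the particular dynamical system constructed in the proof of the existence theorem, not for the system the lemma actually speaks about. The hypotheses of the lemma are only that $\mathbb{D}$ is a dynamical system in the sense of Definition 3.6 in \cite{biagini_mazzon_oberpriller_meyer_brandis_2022} and that it is Markov conditionally independent given $\tilde{\omega}$; it must therefore cover \emph{any} such system on \emph{any} rich Fubini extension. Your proof, however, opens by ``recalling from the construction in the previous theorem'' that the mutation kernel is $\prod_{i\in I}\gamma_i^{\tilde{\omega},\hat{\omega}^{3n-3}}$, that the break-up kernel is a product over $B^n$ and $A^n$, and that the matching substep is governed by Proposition 3.12 in \cite{biagini_mazzon_oberpriller_meyer_brandis_2022}, and it closes by transferring internal independence on the hyperfinite model to the Loeb space. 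None of these objects (internal product kernels, hyperfinite agent space, Loeb lift) is supplied by the hypotheses of the lemma; they exist only for the specific system built in the proof of Theorem 3.13. The paper's proof instead adapts Lemma 3 in \cite{RandomMatchingDiscrete}, which is a purely measure-theoretic induction extracting cross-agent independence directly from the MCI conditions of Definition 3.8 --- those conditions are precisely the axioms asserting that, for $\lambda\otimes\lambda$-almost every pair $(i,j)$, the substep transitions of $i$ and $j$ are conditionally independent given their current extended types and the cross-sectional distribution --- carried out $\tilde{\omega}$-wise under $\hat{P}^{\tilde{\omega}}$, with no nonstandard analysis. In effect you have reconstructed a piece of the verification that the \emph{constructed} system is MCI, i.e.\ part of the existence theorem, rather than a proof of the present lemma.

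That said, your inductive skeleton is the right one: interleaving the exact law of large numbers (to make $\hat{p}^{n-1}$, $\check{p}^n$, $\check{\check{p}}^n$ almost surely deterministic, so that conditioning on them is conditioning on constants) with pairwise independence substep by substep is exactly how the decoupling is achieved, and your remark about matched partners being handled by an almost-every-pair argument is also correct. To repair the proposal, replace every appeal to the product structure of the internal kernels by the corresponding clause of the MCI definition, and drop the Loeb-space transfer entirely: essential pairwise independence is then obtained directly on the given Fubini extension $(I\times\hat{\Omega}, \mathcal{I}\boxtimes\hat{\mathcal{F}}, \lambda\boxtimes\hat{P}^{\tilde{\omega}})$. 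You should also make the order of steps explicit to avoid circularity: the exact law of large numbers requires essential pairwise independence of the family it is applied to, so within period $n$ you must first establish independence of the $\bar{\beta}_i^n$ (from the MCI mutation clause plus the inductive hypothesis and the a.s.\ constancy of $\hat{p}^{n-1}$) before applying the law to get constancy of $\check{p}^n$, and likewise for $\bar{\bar{\beta}}_i^n$ and $\check{\check{p}}^n$; your write-up gestures at this but leaves the quantifier order implicit.
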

\begin{proof}
	This can be proven by the same arguments used in the proof of Lemma 3 in \cite{RandomMatchingDiscrete}.
\end{proof}
We now derive a result which shows how to compute for a fixed $\tilde{\omega} \in \tilde{\Omega}$ the expected cross-sectional distributions $\mathbb{E}^{\hat{P}^{\tilde{\omega}}}[\check{p}^n ]$, $\mathbb{E}^{\hat{P}^{\tilde{\omega}}}[\check{\check{p}}^n ]$ and $\mathbb{E}^{\hat{P}^{\tilde{\omega}}}[\hat{p}^n]$.
\begin{lemma} \label{lemma:Auxiliary1}
	The following holds for any fixed $\tilde{\omega} \in \tilde{\Omega}$.
	\begin{enumerate}
		\item For each $n \geq 1$, $\mathbb{E}^{\hat{P}^{\tilde{\omega}}}[\hat{p}^n]= \Gamma^n ({\tilde{\omega}},\mathbb{E}^{\hat{P}^{\tilde{\omega}}}[\hat{p}^{n-1}])$, with $\Gamma$ defined in \eqref{eq:DefinitionGamma}.
		\item For each $n\geq 1$, we have
		\begin{align*}
		\mathbb{E}^{\hat{P}^{\tilde{\omega}}}[\check{p}^n_{kl} ]=\sum_{k_1, l_1 \in S}  {\eta}_{k_1,k}({\tilde{\omega}},n,\mathbb{E}^{\hat{P}^{\tilde{\omega}}}[\hat{p}^{n-1}]){\eta}_{l_1,l}({\tilde{\omega}},n,\mathbb{E}^{\hat{P}^{\tilde{\omega}}}[\hat{p}^{n-1}]) \mathbb{E}^{\hat{P}^{\tilde{\omega}}}[\hat{p}^{n-1}_{k_1,l_1}]
	\end{align*}
	and
	\begin{align} \nonumber
		\mathbb{E}^{\hat{P}^{\tilde{\omega}}}[\check{p}_{kJ}^n]=\sum_{k_1 \in S} {\eta}_{k_1,k}({\tilde{\omega}},n,\mathbb{E}^{\hat{P}^{\tilde{\omega}}}[\hat{p}^{n-1}])  \mathbb{E}^{\hat{P}^{\tilde{\omega}}}[\hat{p}^{n-1}_{k_1,J}].
	\end{align}
	\item For each $n \geq 1$, we have
	\begin{align} \nonumber
		\mathbb{E}^{\hat{P}^{\tilde{\omega}}}[\check{\check{p}}^n_{kl}] =\mathbb{E}^{\hat{P}^{\tilde{\omega}}}[\check{p}^n_{kl}]+ {\theta}_{kl}({\tilde{\omega}},n,\mathbb{E}^{\hat{P}^{\tilde{\omega}}}[\check{p}^n ])\mathbb{E}^{\hat{P}^{\tilde{\omega}}}[\check{p}^n_{kJ}]
	\end{align}
	and
	\begin{align} \nonumber
	\mathbb{E}^{\hat{P}^{\tilde{\omega}}}[\check{\check{p}}^n_{kJ} ] 
={b}_{k}({\tilde{\omega}},n,\mathbb{E}^{\hat{P}^{\tilde{\omega}}}[\check{p}^n ]) \mathbb{E}^{\hat{P}^{\tilde{\omega}}}[\check{p}^n_{kJ}].
\end{align}
	\end{enumerate}	
\end{lemma}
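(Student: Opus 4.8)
The plan is to compute each of the three expected cross-sectional distributions by combining the exact law of large numbers for a continuum of essentially pairwise independent random variables --- available on the rich Fubini extension and applicable here by Lemma \ref{lemma:IndependenceRandomProcesses} --- with the one-step transition kernels prescribed in Definition \ref{defi:DynamicalSystemDiscrete}. I would treat the mutation step (part 2) first, then the matching step (part 3), and only at the end assemble the break-up step into the full recursion (part 1), since the formula for $\mathbb{E}^{\hat{P}^{\tilde{\omega}}}[\hat{p}^n]$ is most naturally read off as the composition of the previous two together with break-up.

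The observation I would use throughout is that each cross-sectional distribution $\hat{p}^{n-1}$, $\check{p}^n$ and $\check{\check{p}}^n$ is itself $\hat{P}^{\tilde{\omega}}$-almost surely constant and equal to its expectation. Indeed, for fixed $\tilde{\omega}$ the families $\{\beta^{n-1}_i\}_{i\in I}$, $\{\bar{\beta}^n_i\}_{i\in I}$ and $\{\bar{\bar{\beta}}^n_i\}_{i\in I}$ are essentially pairwise independent by Lemma \ref{lemma:IndependenceRandomProcesses}, so the exact law of large numbers gives, for $\hat{P}^{\tilde{\omega}}$-almost every realization,
\begin{equation*}
\hat{p}^{n-1}_{k_1 l_1} = \lambda(\{i : \alpha^{n-1}_i = k_1,\, g^{n-1}_i = l_1\}) = \int_I \hat{P}^{\tilde{\omega}}(\alpha^{n-1}_i = k_1,\, g^{n-1}_i = l_1)\, d\lambda(i) = \mathbb{E}^{\hat{P}^{\tilde{\omega}}}[\hat{p}^{n-1}_{k_1 l_1}],
\end{equation*}
and likewise for $\check{p}^n$ and $\check{\check{p}}^n$. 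This is what will let me replace each random intensity argument $\eta(\tilde{\omega},n,\hat{p}^{n-1})$, $\theta(\tilde{\omega},n,\check{p}^n)$, $\xi(\tilde{\omega},n,\check{\check{p}}^n)$ by the deterministic argument $\mathbb{E}^{\hat{P}^{\tilde{\omega}}}[\hat{p}^{n-1}]$, $\mathbb{E}^{\hat{P}^{\tilde{\omega}}}[\check{p}^n]$, $\mathbb{E}^{\hat{P}^{\tilde{\omega}}}[\check{\check{p}}^n]$, thereby pulling the intensities out of the expectation exactly as the stated formulas require.

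For part 2 I would write $\check{p}^n_{kl} = \int_I \mathbbm{1}_{\{\bar{\alpha}^n_i = k,\,\bar{g}^n_i = l\}}\, d\lambda(i)$, use Fubini on the Fubini extension to exchange $\mathbb{E}^{\hat{P}^{\tilde{\omega}}}$ with $\int_I d\lambda$, and condition on the pre-mutation type by the tower property; the individual mutation probabilities \eqref{eq:IndiMutation1}--\eqref{eq:IndiMutation2} contribute the factors $\eta_{k_1 k}\,\eta_{l_1 l}$ (for $l\in S$) and $\eta_{k_1 k}\,\delta_J$ (for the unmatched case), and integrating the conditioning indicator back over $I$ produces $\mathbb{E}^{\hat{P}^{\tilde{\omega}}}[\hat{p}^{n-1}_{k_1 l_1}]$, so summing over $(k_1,l_1)$ yields the two asserted identities. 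Part 3 is structurally identical using \eqref{eq:MatchingCondProb1}--\eqref{eq:MatchingCondProb2}: already-matched agents keep their partner type while an unmatched type-$k$ agent acquires a type-$l$ partner with probability $\theta_{kl}$ (or stays unmatched with probability $b_k$), giving $\mathbb{E}^{\hat{P}^{\tilde{\omega}}}[\check{\check{p}}^n_{kl}] = \mathbb{E}^{\hat{P}^{\tilde{\omega}}}[\check{p}^n_{kl}] + \theta_{kl}(\tilde{\omega},n,\mathbb{E}^{\hat{P}^{\tilde{\omega}}}[\check{p}^n])\,\mathbb{E}^{\hat{P}^{\tilde{\omega}}}[\check{p}^n_{kJ}]$ and the companion unmatched formula.

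Finally, for part 1 I would push the same computation through the break-up kernels \eqref{eq:BreakUpCondProb0.1}--\eqref{eq:BreakUpCondProb2} applied to the post-matching distribution and then recognize the result as $\Gamma^n$: substituting the part-2 formulas into the part-3 formulas identifies $\mathbb{E}^{\hat{P}^{\tilde{\omega}}}[\check{p}^n]$ and $\mathbb{E}^{\hat{P}^{\tilde{\omega}}}[\check{\check{p}}^n]$ with the auxiliary quantities $\tilde{p}^n$ and $\tilde{\tilde{p}}^n$ evaluated at $\hat{p}=\mathbb{E}^{\hat{P}^{\tilde{\omega}}}[\hat{p}^{n-1}]$, after which the break-up step sends a matched pair $(k_1,l_1)$ to $(k,l)$ with weight $(1-\xi_{k_1 l_1})\sigma_{k_1 l_1}[k,l]$ and to an unmatched $k$ with weight $\xi_{k_1 l_1}\varsigma_{k_1 l_1}[k]$ (leaving the unmatched mass $b_k\tilde{p}^n_{kJ}$ in place), which is precisely the bookkeeping in \eqref{eq:DefinitionGamma}--\eqref{eq:x}. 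I expect the main obstacle to be the rigorous justification of the displayed almost-sure replacement: that the exact law of large numbers genuinely makes the three cross-sectional distributions deterministic and permits evaluating the (generally nonlinear) intensity functions at their expectations, uniformly across the three sub-steps and compatibly with the Loeb- and Fubini-extension manipulations of the nonstandard construction. Once that replacement is licensed, everything reduces to routine conditional-probability bookkeeping.
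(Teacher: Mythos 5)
Your proposal is correct and follows essentially the same route as the paper's proof: invoke Lemma \ref{lemma:IndependenceRandomProcesses} together with the exact law of large numbers to conclude that $\hat{p}^{n-1}$, $\check{p}^n$ and $\check{\check{p}}^n$ are $\hat{P}^{\tilde{\omega}}$-a.s.\ equal to their expectations, so that the conditional probabilities \eqref{eq:IndiMutation1}--\eqref{eq:IndiMutation2}, \eqref{eq:MatchingCondProb1}--\eqref{eq:MatchingCondProb2} and \eqref{eq:BreakUpCondProb1}--\eqref{eq:BreakUpCondProb2} can be rewritten with deterministic intensity arguments, after which the Fubini/tower-property bookkeeping (the ``same calculations as in Lemma 4 of \cite{RandomMatchingDiscrete}'') yields parts 2 and 3, and substitution assembles part 1 into $\Gamma^n$. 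The replacement step you flag as the main obstacle is exactly what the paper licenses via the exact law of large numbers applied sub-step by sub-step, so no gap remains.
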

\begin{proof}
Fix $\tilde{\omega} \in \tilde{\Omega}$ and $k,l \in S$. By Lemma \ref{lemma:IndependenceRandomProcesses} we know that the processes $(\beta_i^n)_{n=0}^{\infty}, i \in I,$ are essentially pairwise independent. Then the exact law of large numbers in Lemma 1 in \cite{RandomMatchingDiscrete} implies that ${\hat{p}^{n-1}(\hat{\omega})}=\mathbb{E}^{\hat{P}^{\tilde{\omega}}}[\lambda(\beta^{n-1})^{-1}]$ for $\hat{P}$-almost all $\hat{\omega} \in \hat{\Omega}$. Thus equations \eqref{eq:IndiMutation1} and \eqref{eq:IndiMutation2} are equivalent to 
\begin{align}
&\hat{P}^{\tilde{\omega}}\left(\bar{\alpha}_i^n=k_2, \bar{g}_i^n=l_2 \vert \alpha_{i}^{n-1}=k_1, g_i^{n-1}=l_1  \right)={\eta}_{k_1,k_2}\left({\tilde{\omega}},n,\mathbb{E}^{P^{\tilde{\omega}}}[\hat{p}^{n-1}]\right){\eta}_{l_1,l_2}\left({\tilde{\omega}},n,\mathbb{E}^{P^{\tilde{\omega}}}[\hat{p}^{n-1}]\right)  \label{eq:IndiMutatationRewritten1.0}\\
&{\hat P^{\tilde{\omega}}}\left(\bar{\alpha}_i^n=k_2, \bar{g}_i^n=r \vert \alpha_{i}^{n-1}=k_1, g_i^{n-1}=J\right)={\eta}_{k_1,k_2}\left({\tilde{\omega}},n,\mathbb{E}^{P^{\tilde{\omega}}}[\hat{p}^{n-1}]\right)\delta_J(r).  \label{eq:IndiMutatationRewritten2.0}
\end{align} Therefore, for any $k_1,l_1 \in S$ we have
\begin{align}
	&\hat{P}^{\tilde{\omega}}\left(\bar{\beta}_i^n=(k,J) \vert \beta_i^{n-1}=(k_1,l_1)\right)=0 \label{eq:GeneralizationLemma4_1} \\
	&\hat{P}^{\tilde{\omega}}\left(\bar{\beta}_i^n=(k,l) \vert \beta_i^{n-1}=(k_1,J) \right)=0. \label{eq:GeneralizationLemma4_2}
\end{align}
Then with the same calculations as in the proof of Lemma 4 in \cite{RandomMatchingDiscrete} we get  that 
\begin{align}
		\mathbb{E}^{\hat{P}^{\tilde{\omega}}}[\check{p}^n_{kl}] &= \mathbb{E}^{\hat{P}^{\tilde{\omega}}}[\lambda(i \in I: \bar{\beta}_{\omega}^n(i)=(k,l)) ]\nonumber \\
		&=\int_I \hat{P}^{\tilde{\omega}}( \bar{\beta}^n_i=(k,l)) d\lambda(i) \nonumber \\
		&=\sum_{k_1,l_1 \in S} {\eta}_{k_1,k}({\tilde{\omega}},n,\mathbb{E}^{\hat{P}^{\tilde{\omega}}}[\hat{p}^{n-1}])  {\eta}_{l_1,l}({\tilde{\omega}},n,\mathbb{E}^{\hat{P}^{\tilde{\omega}}}[\hat{p}^{n-1}]) \mathbb{E}^{\hat{P}^{\tilde{\omega}}}[\hat{p}^{n-1}_{k_1 l_1}] \label{eq:GeneralizationLemma4_3} 
	\end{align}
and
	\begin{align} \label{eq:GeneralizationLemma4_8}
		\mathbb{E}^{\hat{P}^{\tilde{\omega}}}[\check{p}_{kJ}^n]=\sum_{k_1 \in S} {\eta}_{k_1,k}({\tilde{\omega}},n,\mathbb{E}^{\hat{P}^{\tilde{\omega}}}[\hat{p}^{n-1}])  \mathbb{E}^{\hat{P}^{\tilde{\omega}}}[\hat{p}^{n-1}_{kJ}].
	\end{align}
By Lemma \ref{lemma:IndependenceRandomProcesses} we know that $\bar{\beta}^n$ is  essentially pairwise independent. Again it follows by the exact law of large numbers that ${\check{p}^{n}}(\hat{\omega})=\mathbb{E}^{\hat{P}^{\tilde{\omega}}}[\check{p}^n]$ for $\hat{P}^{\tilde{\omega}}$-almost all $\hat{\omega} \in \hat{\Omega}$.
Then \eqref{eq:MatchingCondProb1} and \eqref{eq:MatchingCondProb2} are equivalent to
\begin{equation} \label{eq:MatchingCondRewrittenProb1}
	\hat{P}^{\tilde{\omega}}(\bar{\bar{g}}^n=l \vert \bar{\alpha}_i^n=k, \bar{g}_i^n=J)={\theta}_{kl}({\tilde{\omega}},n,\mathbb{E}^{\hat{P}^{\tilde{\omega}}}[\check{p}^n])
\end{equation}
\begin{equation} \label{eq:MatchingCondRewrittenProb2}
	\hat{P}^{\tilde{\omega}}(\bar{\bar{g}}^n=J \vert \bar{\alpha}_i^n=k, \bar{g}_i^n=J)={b}_{k}({\tilde{\omega}},n,\mathbb{E}^{\hat{P}^{\tilde{\omega}}}[\check{p}^n ]).
\end{equation}
	
By the same calculations as in the proof of Lemma 4 in \cite{RandomMatchingDiscrete} we have 
	\begin{align}
		&\mathbb{E}^{\hat{P}^{\tilde{\omega}}}[\check{\check{p}}^n_{kl} ]=\mathbb{E}^{\hat{P}^{\tilde{\omega}}}[\check{p}^n_{kl}]+ {\theta}_{kl}({\tilde{\omega}},n,\mathbb{E}^{\hat{P}^{\tilde{\omega}}}[\check{p}^n])\mathbb{E}^{\hat{P}^{\tilde{\omega}}}[\check{p}^n_{kJ}]\label{eq:GeneralizationLemma4_8_a_0}
	\end{align}
and
\begin{align}
	&\mathbb{E}^{\hat{P}^{\tilde{\omega}}}[\check{\check{p}}^n_{kJ}]={b}_{k}({\tilde{\omega}},n,\mathbb{E}^{\hat{P}^{\tilde{\omega}}}[\check{p}^n ]) \mathbb{E}^{\hat{P}^{\tilde{\omega}}}[\check{p}^n_{kJ}]\label{eq:GeneralizationLemma4_8_a}.
\end{align}
By Lemma \ref{lemma:IndependenceRandomProcesses}, $\bar{\bar{\beta}}^n$ is essentially pairwise independent and thus ${\check{\check{p}}^{n}}(\hat{\omega})=\mathbb{E}^{\hat{P}^{\tilde{\omega}}}[{\check{\check{p}}^{n}}]$ for $\hat{P}$-almost all $\hat{\omega} \in \hat{\Omega}$. Then \eqref{eq:BreakUpCondProb1} and \eqref{eq:BreakUpCondProb2} are equivalent to
\begin{align*}
	&\hat{P}^{\tilde{\omega}}(\alpha_i^n=l_1,g_i^n=l_2 \vert \alpha_i^n=k_1, \bar{\bar{g}}_i^n=k_2)=(1-\xi_{k_1k_2}({\tilde{\omega}},n,\mathbb{E}^{\hat{P}^{\tilde{\omega}}}[\check{\check{p}}^n ])) \sigma_{k_1k_2}[l_1,l_2]\left({\tilde{\omega}},n,\mathbb{E}^{\hat{P}^{\tilde{\omega}}}[\check{\check{p}}^n]\right)
\end{align*} 
and
\begin{align*}
	&\hat{P}^{\tilde{\omega}}(\alpha_i^n=l_1,g_i^n=J \vert \alpha_i^n=k_1, \bar{\bar{g}}_i^n=k_2)=\xi_{k_1k_2}({\tilde{\omega}},n,\mathbb{E}^{\hat{P}^{\tilde{\omega}}}[\check{\check{p}}^n ]) \varsigma_{k_1k_2}[l_1,l_2]\left({\tilde{\omega}},n,\mathbb{E}^{\hat{P}^{\tilde{\omega}}}[\check{\check{p}}^n ]\right),
\end{align*}
respectively. Thus 
\begin{align}
	&\mathbb{E}^{\hat{P}^{\tilde{\omega}}}[\hat{p}_{kl}^n]=\sum_{k_1, l_1 \in S} (1-\xi_{k_1 l_1}({\tilde{\omega}},n,\mathbb{E}^{\hat{P}^{\tilde{\omega}}}[\check{\check{p}}^n])) \sigma_{k_1 l_1}[k,l]\left({\tilde{\omega}},n,\mathbb{E}^{\hat{P}^{\tilde{\omega}}}[\check{\check{p}}^n]\right) \mathbb{E}^{\hat{P}^{\tilde{\omega}}}[\check{\check{p}}^n_{k_1 l_1}] \label{eq:GeneralizationLemma4_9}
\end{align}
and
\begin{align}
	\mathbb{E}^{\hat{P}^{\tilde{\omega}}}[\hat{p}_{kJ}^n ] &=\mathbb{E}^{\hat{P}^{\tilde{\omega}}}[\check{\check{p}}^n_{kJ} ] \nonumber \\
	& \quad + \sum_{k_1,l_1 \in S} \xi_{k_1 l_1}({\tilde{\omega}},n,\mathbb{E}^{\hat{P}^{\tilde{\omega}}}[\check{\check{p}}^n]) \varsigma_{k_1 l_1}[k]\left({\tilde{\omega}},n,\mathbb{E}^{\hat{P}^{\tilde{\omega}}}[\check{\check{p}}^n ]\right) \mathbb{E}^{\hat{P}^{\tilde{\omega}}}[\check{\check{p}}^n_{k_1l_1}]. \label{eq:GeneralizationLemma4_10}
\end{align}
By plugging \eqref{eq:GeneralizationLemma4_8} in \eqref{eq:GeneralizationLemma4_9} we get
\begin{align}
	&\mathbb{E}^{\hat{P}^{\tilde{\omega}}}[\hat{p}_{kl}^n] \nonumber \\
	&=\sum_{k_1, l_1 \in S} (1-\xi_{k_1 l_1}({\tilde{\omega}},n,\mathbb{E}^{\hat{P}^{\tilde{\omega}}}[\check{\check{p}}^n])) \sigma_{k_1 l_1}[k,l]\left({\tilde{\omega}},n,\mathbb{E}^{\hat{P}^{\tilde{\omega}}}[\check{\check{p}}^n]\right) \mathbb{E}^{\hat{P}^{\tilde{\omega}}}[{\check{p}}^n_{k_1 l_1}] \nonumber  \\
	&\quad+\sum_{k_1, l_1 \in S} (1-\xi_{k_1 l_1}({\tilde{\omega}},n,\mathbb{E}^{\hat{P}^{\tilde{\omega}}}[\check{\check{p}}^n])) \sigma_{k_1 l_1}[k,l]\left({\tilde{\omega}},n,\mathbb{E}^{\hat{P}^{\tilde{\omega}}}[\check{\check{p}}^n]\right) \eta_{k_1l_1}({\tilde{\omega}},n,\mathbb{E}^{\hat{P}^{\tilde{\omega}}}[{\check{p}}^n])\mathbb{E}^{\hat{P}^{\tilde{\omega}}}[{\check{p}}^n_{k_1J}]. \label{eq:GeneralizationLemma4_10}
\end{align}
By using \eqref{eq:GeneralizationLemma4_8_a} and \eqref{eq:GeneralizationLemma4_9}, it follows that
\begin{align}
	\mathbb{E}^{\hat{P}^{\tilde{\omega}}}[\hat{p}_{kJ}^n ] &={b}_{k}({\tilde{\omega}},n,\mathbb{E}^{\hat{P}^{\tilde{\omega}}}[\check{p}^n ]) \mathbb{E}^{\hat{P}^{\tilde{\omega}}}[\check{p}^n_{kJ}] \nonumber \\
	& \quad + \sum_{k_1,l_1 \in S} \xi_{k_1 l_1}({\tilde{\omega}},n,\mathbb{E}^{\hat{P}^{\tilde{\omega}}}[\check{\check{p}}^n]) \varsigma_{k_1 l_1}[k]\left({\tilde{\omega}},\mathbb{E}^{\hat{P}^{\tilde{\omega}}}[\check{\check{p}}^n ]\right) \mathbb{E}^{\hat{P}^{\tilde{\omega}}}[{\check{p}}^n_{k_1 l_1}] \nonumber \\ 
	& \quad + \sum_{k_1,l_1 \in S} \xi_{k_1 l_1}(\tilde{\omega},n,\mathbb{E}^{\hat{P}^{\tilde{\omega}}}[\check{\check{p}}^n]) \varsigma_{k_1 l_1}[k]\left({\tilde{\omega}},n,\mathbb{E}^{\hat{P}^{\tilde{\omega}}}[\check{\check{p}}^n ]\right){\theta}_{k_1l_1}({\tilde{\omega}},n,\mathbb{E}^{\hat{P}^{\tilde{\omega}}}[{\check{p}}^n])\mathbb{E}^{\hat{P}^{\tilde{\omega}}}[{\check{p}}^n_{k_1J}]. \label{eq:GeneralizationLemma4_11}
	\end{align}
\end{proof}

\begin{lemma} \label{lemma:Auxiliary2}
	Assume that the {discrete} dynamical system {$\mathbb{D}$} defined in {Definition 3.6 in \cite{biagini_mazzon_oberpriller_meyer_brandis_2022}} is Markov conditionally independent given {$\tilde{\omega}\in \tilde{\Omega}$} according to Definition {Definition 3.8 in \cite{biagini_mazzon_oberpriller_meyer_brandis_2022}}. Then for fixed $\tilde{\omega} \in \tilde{\Omega}$ the following holds:
	\begin{enumerate}
		\item For $\lambda$-almost all $i \in I$, the extended type process $\lbrace \beta^n_i \rbrace_{n=0}^{\infty}$ for agent $i$ is a Markov chain on $(I \times \hat{\Omega}, \cal{I} \boxtimes \hat{\mathcal{F}}, \lambda \boxtimes \hat{P}^{\tilde{\omega}})$ with transition matrix $z^n$ after time $n-1$.
		\item $\lbrace \beta^n \rbrace_{n=0}^{\infty}$ is also a Markov chain with transition matrix $z^n$ at time $n-1$.
	\end{enumerate}
\end{lemma}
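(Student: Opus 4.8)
The plan is to reduce the two Markov statements to the composition of the three one-step transition kernels already appearing in Definition \ref{defi:DynamicalSystemDiscrete}, after first observing that, conditionally on $\tilde{\omega}$, every cross-sectional type distribution entering the transition probabilities is in fact deterministic. Concretely, I would fix $\tilde{\omega} \in \tilde{\Omega}$ and invoke Lemma \ref{lemma:IndependenceRandomProcesses} together with the exact law of large numbers (Lemma 1 in \cite{RandomMatchingDiscrete}), exactly as in the proof of Lemma \ref{lemma:Auxiliary1}, to deduce that $\hat{p}^{n-1} = \mathbb{E}^{\hat{P}^{\tilde{\omega}}}[\hat{p}^{n-1}]$, $\check{p}^n = \mathbb{E}^{\hat{P}^{\tilde{\omega}}}[\check{p}^n]$ and $\check{\check{p}}^n = \mathbb{E}^{\hat{P}^{\tilde{\omega}}}[\check{\check{p}}^n]$ hold $\hat{P}^{\tilde{\omega}}$-almost surely. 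Hence the conditional probabilities \eqref{eq:IndiMutation1}--\eqref{eq:BreakUpCondProb2} collapse to the deterministic forms \eqref{eq:IndiMutatationRewritten1.0}, \eqref{eq:IndiMutatationRewritten2.0}, \eqref{eq:MatchingCondRewrittenProb1} and \eqref{eq:MatchingCondRewrittenProb2}, so that each sub-step defines a genuine (time-inhomogeneous) stochastic matrix on $\hat{S}$ rather than a random kernel.

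Next I would define $z^n$ as the product $z^n = M^n T^n B^n$ of three stochastic matrices on the extended type space $\hat{S} = S \times (S \cup \lbrace J \rbrace)$: the \emph{mutation} matrix $M^n$ read off from \eqref{eq:IndiMutatationRewritten1.0}--\eqref{eq:IndiMutatationRewritten2.0}, the \emph{matching} matrix $T^n$ read off from \eqref{eq:MatchingCondRewrittenProb1}--\eqref{eq:MatchingCondRewrittenProb2}, and the \emph{break-up/type-change} matrix $B^n$ read off from \eqref{eq:BreakUpCondProb0.1}--\eqref{eq:BreakUpCondProb2}. The entries of $z^n$ are obtained by multiplying out this product and summing over the intermediate post-mutation and post-matching extended types; the bookkeeping is identical to the one already carried out for the map $\Gamma^n$ in \eqref{eq:DefinitionGamma}--\eqref{eq:x}, so that $z^n$ is precisely the one-step kernel whose action on distributions reproduces $\Gamma^n$, which is the consistency already recorded in Lemma \ref{lemma:Auxiliary1}(1).

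To obtain part (1), I would fix a $\lambda$-typical agent $i$ and chain the Markov conditional independence conditions of {Definition 3.8 in \cite{biagini_mazzon_oberpriller_meyer_brandis_2022}}: conditionally on $\tilde{\omega}$, the post-mutation type $\bar{\beta}^n_i$ depends on the past only through $\beta^{n-1}_i$, the post-matching type $\bar{\bar{\beta}}^n_i$ only through $\bar{\beta}^n_i$, and the end-of-period type $\beta^n_i$ only through $\bar{\bar{\beta}}^n_i$; composing these three one-step conditional laws via Chapman--Kolmogorov yields
\[
\hat{P}^{\tilde{\omega}}\bigl(\beta^n_i = \cdot \,\big|\, \beta^0_i, \dots, \beta^{n-1}_i\bigr) = \hat{P}^{\tilde{\omega}}\bigl(\beta^n_i = \cdot \,\big|\, \beta^{n-1}_i\bigr) = z^n(\beta^{n-1}_i, \cdot),
\]
which is the Markov property with transition matrix $z^n$. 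For part (2), the essential pairwise independence of Lemma \ref{lemma:IndependenceRandomProcesses} and the exact law of large numbers again identify the empirical distribution $\hat{p}^n = \lambda(\beta^n)^{-1}$ with $\mathbb{E}^{\hat{P}^{\tilde{\omega}}}[\hat{p}^n]$, so the pooled process on $(I \times \hat{\Omega}, \mathcal{I} \boxtimes \hat{\mathcal{F}}, \lambda \boxtimes \hat{P}^{\tilde{\omega}})$ inherits the same deterministic one-step kernel and is Markov with transition matrix $z^n$.

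The main obstacle I expect is the second step: writing $z^n$ explicitly as the composition of the three kernels while correctly propagating the partner coordinate $g^n$. For matched agents the transition couples the agent's own mutation, the partner's mutation, and the joint post-break-up redistribution $\sigma$ (respectively $\varsigma$ on break-up), so the entries of $B^n$, and hence of $z^n$, involve sums over partner types weighted by $\eta$ and $\theta$; moreover the matching step is intrinsically pairwise, so one must verify that its marginal effect on a single agent genuinely reduces to the scalar kernel $\theta_{kl}$ of \eqref{eq:MatchingCondProb1}. The crucial point is that all these sums are exactly the deterministic cross-sectional distributions produced by the law of large numbers, so that the composition is a bona fide transition matrix and not merely a random kernel; once this identification is in place, both Markov statements follow as above.
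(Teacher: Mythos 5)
Your proposal is correct and follows essentially the same route as the paper's proof: fix $\tilde{\omega}$, use Lemma \ref{lemma:IndependenceRandomProcesses} and the exact law of large numbers to collapse the random kernels \eqref{eq:IndiMutation1}--\eqref{eq:BreakUpCondProb2} into deterministic, agent-independent stochastic matrices; obtain $z^n$ by composing the mutation, matching and break-up kernels; and get the individual Markov property by chaining the MCI conditions (the paper delegates this chaining to the arguments of Lemma 5 in \cite{RandomMatchingDiscrete}). Your matrix product $M^n T^n B^n$ is exactly the computation the paper performs when it substitutes the identities of Lemma \ref{lemma:Auxiliary1} into one another and reads off the coefficients of $\mathbb{E}^{\hat{P}^{\tilde{\omega}}}[\hat{p}^{n-1}_{k'l'}]$ and $\mathbb{E}^{\hat{P}^{\tilde{\omega}}}[\hat{p}^{n-1}_{k'J}]$ to produce \eqref{eq:CalculationTransition1}--\eqref{eq:CalculationTransition4}.

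One correction concerning part (2): the exact law of large numbers is not the mechanism that makes the pooled process on $(I \times \hat{\Omega}, \cal{I} \boxtimes \hat{\mathcal{F}}, \lambda \boxtimes \hat{P}^{\tilde{\omega}})$ Markov, and invoking it there is misdirected (it only re-establishes that $z^n$ is deterministic, which you already had). The paper's argument is the Fubini property of the rich Fubini extension, as in Lemma 6 of \cite{RandomMatchingDiscrete}: the joint law of $(\beta^0, \dots, \beta^n)$ under $\lambda \boxtimes \hat{P}^{\tilde{\omega}}$ is the $\lambda$-mixture over agents of the individual joint laws, and a mixture of Markov chains whose transition matrices coincide for $\lambda$-almost every agent at every step (and are deterministic given $\tilde{\omega}$) is again a Markov chain with those same matrices. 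This mixture step is the content that needs to be stated for part (2); with it, your argument matches the paper's.
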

\begin{proof}
Fix $\tilde{\omega} \in \tilde{\Omega}$.\\
1. The Markov property of $\lbrace \beta^n_i \rbrace_{n=0}^{\infty}$ on $(I \times \hat{\Omega}, \cal{I} \boxtimes \hat{\mathcal{F}}, \lambda \boxtimes \hat{P}^{\tilde{\omega}})$ follows by using the same arguments as in the proof of Lemma 5 in \cite{RandomMatchingDiscrete}, for $\lambda$-almost all $i \in I$. We now derive the transition matrix with similar arguments as in \cite{RandomMatchingDiscrete}.
By putting together \eqref{eq:GeneralizationLemma4_3}, \eqref{eq:GeneralizationLemma4_8} and \eqref{eq:GeneralizationLemma4_10}, we get
\begin{align}
	&\mathbb{E}^{\hat{P}^{\tilde{\omega}}}[\hat{p}_{kl}^n] \nonumber \\
	&=\sum_{k_1, l_1,k',l' \in S} (1-\xi_{k_1 l_1}({\tilde{\omega}},n,\tilde{\tilde{p}}^{\tilde{\omega},n})) \sigma_{k_1 l_1}[k,l]\left({\tilde{\omega}},n,\tilde{\tilde{p}}^{\tilde{\omega},n}\right){\eta}_{k' k_1}({\tilde{\omega}},n,\mathbb{E}^{\hat{P}^{\tilde{\omega}}}[\hat{p}^{n-1}]) \nonumber \\
	&\quad \quad \quad \quad  \cdot {\eta}_{l' l_1}({\tilde{\omega}},n,\mathbb{E}^{\hat{P}^{\tilde{\omega}}}[\hat{p}^{n-1}]) \mathbb{E}^{\hat{P}^{\tilde{\omega}}}[\hat{p}^{n-1}_{k'l'}] \nonumber  \\
	&\quad+\sum_{k_1, l_1,k' \in S} (1-\xi_{k_1 l_1}({\tilde{\omega}},n,\tilde{\tilde{p}}^{\tilde{\omega},n})) \sigma_{k_1 l_1}[k,l]\left({\tilde{\omega}},n,\tilde{\tilde{p}}^{\tilde{\omega},n}\right) {\theta}_{k_1l_1}({\tilde{\omega}},n,\tilde{p}^{\tilde{\omega},n}) \nonumber \\
	&\quad \quad \quad \quad  \cdot {\eta}_{k' k_1}({\tilde{\omega}},n,\mathbb{E}^{\hat{P}^{\tilde{\omega}}}[\hat{p}^{n-1}]) \mathbb{E}^{\hat{P}^{\tilde{\omega}}}[\hat{p}^{n-1}_{k'J}]. \nonumber  
\end{align}	
Thus we have
\begin{align}
	z^n_{(k'J)(kl)}(\tilde \omega)&= \sum_{k_1, l_1 \in S}(1-\xi_{k_1 l_1}({\tilde{\omega}},n,\tilde{\tilde{p}}^{\tilde{\omega},n})) \sigma_{k_1 l_1}[k,l]\left({\tilde{\omega}},n,\tilde{\tilde{p}}^{\tilde{\omega},n}\right) {\theta}_{k_1l_1}({\tilde{\omega}},n,\tilde{p}^{\tilde{\omega},n}) \nonumber \\
	 &\quad \cdot {\eta}_{k' k_1}({\tilde{\omega}},n,\mathbb{E}^{\hat{P}^{\tilde{\omega}}}[\hat{p}^{n-1}]) \label{eq:CalculationTransition1}
\end{align}
and
\begin{align}
	z^n_{(k'l')(kl)}(\tilde \omega)&=\sum_{k_1, l_1 \in S} (1-\xi_{k_1 l_1}({\tilde{\omega}},n,\tilde{\tilde{p}}^{\tilde{\omega},n})) \sigma_{k_1 l_1}[k,l]\left({\tilde{\omega}},n,\tilde{\tilde{p}}^{\tilde{\omega},n}\right){\eta}_{k' k_1}({\tilde{\omega}},n,\mathbb{E}^{\hat{P}^{\tilde{\omega}}}[\hat{p}^{n-1}]) \nonumber \\
	&\quad \quad \quad \quad  \cdot {\eta}_{l' l_1}({\tilde{\omega}},n,\mathbb{E}^{\hat{P}^{\tilde{\omega}}}[\hat{p}^{n-1}]). \label{eq:CalculationTransition2}
\end{align}
Similarly, equations \eqref{eq:GeneralizationLemma4_3}, \eqref{eq:GeneralizationLemma4_8} and \eqref{eq:GeneralizationLemma4_11} yield to
\begin{align}
	\mathbb{E}^{\hat{P}^{\tilde{\omega}}}[\hat{p}_{kJ}^n ] &=\sum_{k' \in S}{b}_{k}({\tilde{\omega}},n,\tilde{p}^{\tilde{\omega},n}) {\eta}_{k'k}({\tilde{\omega}},n,\mathbb{E}^{\hat{P}^{\tilde{\omega}}}[\hat{p}^{n-1}]) \mathbb{E}^{\hat{P}^{\tilde{\omega}}}[\hat{p}^{n-1}_{k'J}]\nonumber \\
	& \quad + \sum_{k_1,l_1,k',l' \in S} \xi_{k_1 l_1}({\tilde{\omega}},n,\tilde{\tilde{p}}^{\tilde{\omega},n}) \varsigma_{k_1 l_1}[k]\left[{\tilde{\omega}},n,\tilde{\tilde{p}}^{\tilde{\omega},n}\right] \nonumber  \\
	& \quad \quad \cdot {\eta}_{k'k_1}({\tilde{\omega}},n,\mathbb{E}^{\hat{P}^{\tilde{\omega}}}[\hat{p}^{n-1}]){\eta}_{l'l_1}({\tilde{\omega}},n,\mathbb{E}^{\hat{P}^{\tilde{\omega}}}[\hat{p}^{n-1}]) \mathbb{E}^{\hat{P}^{\tilde{\omega}}}[\hat{p}^{n-1}_{k'l'}]\nonumber \\ 
	& \quad + \sum_{k_1,l_1,k' \in S} \xi_{k_1 l_1}({\tilde{\omega}},n,\tilde{\tilde{p}}^{\tilde{\omega},n}) \varsigma_{k_1 l_1}[k]\left({\tilde{\omega}},n,\tilde{\tilde{p}}^{\tilde{\omega},n}\right){\theta}_{k_1l_1}({\tilde{\omega}},n,\tilde{p}^{\tilde{\omega},n})\nonumber  \\
	&\quad \quad \cdot  {\eta}_{k'k}({\tilde{\omega}},n,\mathbb{E}^{\hat{P}^{\tilde{\omega}}}[\hat{p}^{n-1}]) \mathbb{E}^{\hat{P}^{\tilde{\omega}}}[\hat{p}^{n-1}_{k'J}]. \nonumber
\end{align}
Therefore, the transition probabilities from time $n-1$ to time $n$ can be written as
\begin{align}
	z^n_{(k'l')(kJ)}(\tilde \omega)&=\sum_{k_1,l_1 \in S} \xi_{k_1 l_1}({\tilde{\omega}},n,\tilde{\tilde{p}}^{\tilde{\omega},n}) \varsigma_{k_1 l_1}[k]\left({\tilde{\omega}},n,\tilde{\tilde{p}}^{\tilde{\omega},n}\right) \nonumber  \\
	& \quad \quad \cdot {\eta}_{k'k_1}({\tilde{\omega}},n,\mathbb{E}^{\hat{P}^{\tilde{\omega}}}[\hat{p}^{n-1}]){\eta}_{l'l_1}({\tilde{\omega}},n,\mathbb{E}^{\hat{P}^{\tilde{\omega}}}[\hat{p}^{n-1}]) \label{eq:CalculationTransition3}
\end{align}
and
\begin{align}
	z_{(k'J)(kJ)}^n(\tilde \omega)&={b}_{k}({\tilde{\omega}},n,\tilde{p}^{\tilde{\omega},n}) {\eta}_{k'k}({\tilde{\omega}},n,\mathbb{E}^{\hat{P}^{\tilde{\omega}}}[\hat{p}^{n-1}]) \nonumber \\
	 &\quad +\sum_{k_1,l_1\in S} \xi_{k_1 l_1}(\tilde{\omega},n,\tilde{\tilde{p}}^{\tilde{\omega},n}) \varsigma_{k_1 l_1}[k]\left({\tilde{\omega}},n,\tilde{\tilde{p}}^{\tilde{\omega},n}\right){\theta}_{k_1l_1}({\tilde{\omega}},n,\tilde{p}^{\tilde{\omega},n}) {\eta}_{k'k}({\tilde{\omega}},n,\mathbb{E}^{\hat{P}^{\tilde{\omega}}}[\hat{p}^{n-1}]). \label{eq:CalculationTransition4}
\end{align}
2. The transition matrix of $\lbrace \beta^n \rbrace_{n=0}^{\infty}$ at time $n-1$ can be derived by using \eqref{eq:CalculationTransition1}-\eqref{eq:CalculationTransition4} and the Fubini property applied to $\lambda \boxtimes \hat{P}^{\tilde{\omega}}$ for every fixed $\tilde{\omega} \in \tilde{\Omega}$ as in the proof of Lemma 6 in \cite{RandomMatchingDiscrete}.
\end{proof}

We are now able to prove {Theorem 3.14 in \cite{biagini_mazzon_oberpriller_meyer_brandis_2022}}, which we present here.

\begin{theorem}
	Assume that the discrete dynamical system $\mathbb{D}$ introduced in {Definition 3.6 in \cite{biagini_mazzon_oberpriller_meyer_brandis_2022}} is Markov conditionally independent given $\tilde{\omega} \in \tilde{\Omega}$ according to {Definition 3.8 in \cite{biagini_mazzon_oberpriller_meyer_brandis_2022}}. Given $\tilde{\omega} \in \tilde{\Omega}$, the following holds:
\begin{enumerate}
	\item For each $n \geq 1$, $\mathbb{E}^{\hat{P}^{\tilde{\omega}}}[\hat{p}^n]= \Gamma^n ({\tilde{\omega}},\mathbb{E}^{\hat{P}^{\tilde{\omega}}}[\hat{p}^{n-1}])$.
	\item For each $n \geq 1 $,  we have
		\begin{align*}
		\mathbb{E}^{\hat{P}^{\tilde{\omega}}}[\check{p}^n_{kl} ]=\sum_{k_1, l_1 \in S}  \eta_{k_1,k}({\tilde{\omega}},n,\mathbb{E}^{\hat{P}^{\tilde{\omega}}}[\hat{p}^{n-1}])\eta_{l_1,l}({\tilde{\omega}},n,\mathbb{E}^{\hat{P}^{\tilde{\omega}}}[\hat{p}^{n-1}]) \mathbb{E}^{\hat{P}^{\tilde{\omega}}}[\hat{p}^{n-1}_{k_1,l_1}]
	\end{align*}
	and
	\begin{align} \nonumber
		\mathbb{E}^{\hat{P}^{\tilde{\omega}}}[\check{p}_{kJ}^n]=\sum_{k_1 \in S} \eta_{k_1,k}({\tilde{\omega}},n,\mathbb{E}^{\hat{P}^{\tilde{\omega}}}[\hat{p}^{n-1}])  \mathbb{E}^{\hat{P}^{\tilde{\omega}}}[\hat{p}^{n-1}_{k_1,J}].
	\end{align}
	\item For each $n \geq 1$,  we have
	\begin{align} \nonumber
		\mathbb{E}^{\hat{P}^{\tilde{\omega}}}[\check{\check{p}}^n_{kl}] =\mathbb{E}^{\hat{P}^{\tilde{\omega}}}[\check{p}^n_{kl}]+ {\theta}_{kl}({\tilde{\omega}},n,\mathbb{E}^{\hat{P}^{\tilde{\omega}}}[\check{p}^n ])\mathbb{E}^{\hat{P}^{\tilde{\omega}}}[\check{p}^n_{kJ}]
	\end{align}
	and
	\begin{align} \nonumber
	\mathbb{E}^{\hat{P}^{\tilde{\omega}}}[\check{\check{p}}^n_{kJ} ] 
		={b}_{k}({\tilde{\omega}},n,\mathbb{E}^{\hat{P}^{\tilde{\omega}}}[\check{p}^n ]) \mathbb{E}^{\hat{P}^{\tilde{\omega}}}[\check{p}^n_{kJ}].
	\end{align}
\item For $\lambda$-almost every agent $i$, the extended-type process $\lbrace \beta_i^n \rbrace_{n=0}^{\infty}$ is a Markov chain in $\hat{S}$ on $(I \times \hat{\Omega}, \cal{I} \boxtimes \hat{\mathcal{F}}, \lambda \boxtimes \hat{P}^{\tilde{\omega}}),$ whose transition matrix $z^n$ at time $n-1$ is given by
\begin{align}
	z^n_{(k'J)(kl)}(\tilde \omega)&= \sum_{k_1, l_1,k' \in S}(1-\xi_{k_1 l_1}({\tilde{\omega}},n,\tilde{\tilde{p}}^{\tilde{\omega},n})) \sigma_{k_1 l_1}[k,l]\left({\tilde{\omega}},n,\tilde{\tilde{p}}^{\tilde{\omega},n}\right) {\theta}_{k_1l_1}({\tilde{\omega}},n,\tilde{p}^{\tilde{\omega},n}) \nonumber \\
	 &\quad \cdot {\eta}_{k' k_1}({\tilde{\omega}},n,\mathbb{E}^{\hat{P}^{\tilde{\omega}}}[\hat{p}^{n-1}]) \label{eq:TransitionMatrix1}\\ 
	z^n_{(k'l')(kl)}(\tilde \omega)&=\sum_{k_1, l_1,k',l' \in S} (1-\xi_{k_1 l_1}({\tilde{\omega}},n,\tilde{\tilde{p}}^{\tilde{\omega},n})) \sigma_{k_1 l_1}[k,l]\left({\tilde{\omega}},n,\tilde{\tilde{p}}^{\tilde{\omega},n}\right)\eta_{k' k_1}({\tilde{\omega}},n,\mathbb{E}^{\hat{P}^{\tilde{\omega}}}[\hat{p}^{n-1}]) \nonumber \\
	&\quad \cdot \eta_{l' l_1}({\tilde{\omega}},n,\mathbb{E}^{\hat{P}^{\tilde{\omega}}}[\hat{p}^{n-1}])  \label{eq:TransitionMatrix2} \\
	z^n_{(k'l')(kJ)}(\tilde \omega)&=\sum_{k_1,l_1 \in S} \xi_{k_1 l_1}({\tilde{\omega}},n,\tilde{\tilde{p}}^{\tilde{\omega},n}) \varsigma_{k_1 l_1}[k]\left({\tilde{\omega}},n,\tilde{\tilde{p}}^{\tilde{\omega},n}\right) \nonumber  \\
	& \quad \quad \cdot \eta_{k'k_1}({\tilde{\omega}},n,\mathbb{E}^{\hat{P}^{\tilde{\omega}}}[\hat{p}^{n-1}])\eta_{l'l_1}({\tilde{\omega}},n,\mathbb{E}^{\hat{P}^{\tilde{\omega}}}[\hat{p}^{n-1}])  \label{eq:TransitionMatrix3}\\
	z_{(k'J)(kJ)}^n(\tilde \omega)&={b}_{k}({\tilde{\omega}},n,\tilde{p}^{\tilde{\omega},n}) {\eta}_{k'k}({\tilde{\omega}},n,\mathbb{E}^{\hat{P}^{\tilde{\omega}}}[\hat{p}^{n-1}])\nonumber\\
	 & \quad +\sum_{k_1,l_1\in S} \xi_{k_1 l_1}({\tilde{\omega}},n,\tilde{\tilde{p}}^{\tilde{\omega},n}) \varsigma_{k_1 l_1}[k]\left({\tilde{\omega}},n,\tilde{\tilde{p}}^{\tilde{\omega},n}\right)\theta_{k_1l_1}({\tilde{\omega}},n,\tilde{p}^{\tilde{\omega},n})\nonumber \\
	 & \quad \quad  \cdot \eta_{k'k_1}({\tilde{\omega}},n,\mathbb{E}^{\hat{P}^{\tilde{\omega}}}[\hat{p}^{n-1}]).  \label{eq:TransitionMatrix4}
\end{align}
\item For $\lambda$-almost every $i$ and every $\lambda$-almost every $j$, the Markov chains $\lbrace \beta_i^n \rbrace_{n=0}^{\infty}$ and $\lbrace \beta_j^n \rbrace_{n=0}^{\infty}$ are independent on $( \hat{\Omega},  \hat{\mathcal{F}}, \hat{P}^{\tilde{\omega}})$.
\item For $\hat{P}^{\tilde{\omega}}$-almost every $\hat{\omega} \in \hat{\Omega}$, the cross sectional extended type process $\lbrace \beta^n_{\hat{\omega}} \rbrace_{n=0}^{\infty}$ is a Markov chain on $(I, \cal{I}, \lambda)$ with transition matrix $z^n$ at time $n-1$, which is defined in \eqref{eq:TransitionMatrix1}- \eqref{eq:TransitionMatrix4}.
\item {We have $\hat{P}^{\tilde{\omega}}$-a.s. that
\begin{align*}
	\mathbb{E}^{\hat{P}^{\tilde{\omega}}}[\check{p}^n_{kl}]=\check{p}^n_{kl} \quad \text{ and } \quad
	\mathbb{E}^{\hat{P}^{\tilde{\omega}}}[\check{\check{p}}^n_{kl}]=\check{\check{p}}^n_{kl} \quad \text{ and } \quad
	\mathbb{E}^{\hat{P}^{\tilde{\omega}}}[\hat{p}^n_{kl}]&=\hat{p}^n_{kl}. 
\end{align*}}
\end{enumerate}
\end{theorem}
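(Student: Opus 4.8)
The plan is to assemble the statement from the three preparatory lemmas, since for each fixed $\tilde\omega \in \tilde\Omega$ every item reduces to a result already proved on the Fubini extension $(I \times \hat\Omega, \mathcal{I}\boxtimes\hat{\mathcal{F}}, \lambda\boxtimes\hat{P}^{\tilde\omega})$. This reduction is legitimate thanks to the product structure $\Omega = \tilde\Omega \times \hat\Omega$ together with $P = \tilde P \ltimes \hat P$: freezing $\tilde\omega$ turns the input functions $\eta, \theta, \xi, \sigma, \varsigma$ into deterministic maps of the type distribution, so the arguments of \cite{RandomMatchingDiscrete} apply fibrewise. I would therefore begin by fixing $\tilde\omega$ once and for all.

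For items (1)--(3) I would simply invoke Lemma \ref{lemma:Auxiliary1}, whose three parts are, verbatim, the three asserted recursions for $\mathbb{E}^{\hat{P}^{\tilde\omega}}[\hat{p}^n]$, $\mathbb{E}^{\hat{P}^{\tilde\omega}}[\check{p}^n]$ and $\mathbb{E}^{\hat{P}^{\tilde\omega}}[\check{\check{p}}^n]$; nothing further is needed. For item (4) I would combine two facts: the Markov property of $\lbrace \beta^n_i \rbrace_{n\geq 0}$ for $\lambda$-almost every $i$ is Lemma \ref{lemma:Auxiliary2}(1), and the explicit transition matrix is the one already computed in \eqref{eq:CalculationTransition1}--\eqref{eq:CalculationTransition4}, which agrees entrywise with \eqref{eq:TransitionMatrix1}--\eqref{eq:TransitionMatrix4}. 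Item (5), the pairwise independence of $\lbrace \beta^n_i \rbrace$ and $\lbrace \beta^n_j \rbrace$ on $(\hat\Omega, \hat{\mathcal{F}}, \hat{P}^{\tilde\omega})$, is the essential pairwise independence of Lemma \ref{lemma:IndependenceRandomProcesses}. Item (6), the Markov property of the cross-sectional process $\lbrace \beta^n_{\hat\omega} \rbrace$ with the same transition matrix $z^n$, is Lemma \ref{lemma:Auxiliary2}(2), which lifts the per-agent transition matrix to the population level using the Fubini property of $\lambda\boxtimes\hat{P}^{\tilde\omega}$.

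The remaining item (7) is where I expect the only genuine work. Here I would invoke the exact law of large numbers (Lemma 1 in \cite{RandomMatchingDiscrete}): by Lemma \ref{lemma:IndependenceRandomProcesses} the families $(\bar\beta^n_i)_i$, $(\bar{\bar{\beta}}^n_i)_i$ and $(\beta^n_i)_i$ are, for fixed $\tilde\omega$, essentially pairwise independent, so for $\hat{P}^{\tilde\omega}$-almost every $\hat\omega$ the realized cross-sectional distributions coincide with their expectations, giving $\check{p}^n(\hat\omega)=\mathbb{E}^{\hat{P}^{\tilde\omega}}[\check{p}^n]$, $\check{\check{p}}^n(\hat\omega)=\mathbb{E}^{\hat{P}^{\tilde\omega}}[\check{\check{p}}^n]$ and $\hat{p}^n(\hat\omega)=\mathbb{E}^{\hat{P}^{\tilde\omega}}[\hat{p}^n]$. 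The delicate point is that these identities must hold simultaneously at the mutation, matching and break-up substeps and must chain consistently through the recursion of Definition \ref{defi:DynamicalSystemDiscrete}: the deterministic distribution produced at one substep by the law of large numbers is exactly the argument fed to the input functions at the next. Verifying this compatibility, rather than any single computation, is the part I would treat most carefully; everything else is a direct appeal to the lemmas.
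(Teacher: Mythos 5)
Your proposal is correct and follows essentially the same route as the paper: the paper's proof also fixes $\tilde{\omega}$ and obtains Points 1--5 directly from Lemmas \ref{lemma:IndependenceRandomProcesses}, \ref{lemma:Auxiliary1} and \ref{lemma:Auxiliary2}, while Points 6 and 7 are handled by the arguments of Theorem 4 in \cite{RandomMatchingDiscrete}, i.e.\ precisely the Fubini-property lifting and the exact-law-of-large-numbers argument (applied to the essentially pairwise independent families $(\bar{\beta}^n_i)_i$, $(\bar{\bar{\beta}}^n_i)_i$, $(\beta^n_i)_i$) that you describe. Your closing remark about chaining the realized distributions consistently through the substeps is exactly the content already carried out inside the proof of Lemma \ref{lemma:Auxiliary1}, so no additional work is actually left there.
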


\begin{proof}
	Fix $\tilde{\omega} \in \tilde{\Omega}$. Points 1. to 5. of {Theorem 3.14 in \cite{biagini_mazzon_oberpriller_meyer_brandis_2022}} follow directly by Lemma \ref{lemma:IndependenceRandomProcesses}, \ref{lemma:Auxiliary1} and \ref{lemma:Auxiliary2}. Moreover, Points 6. and 7. can be proven by using the same arguments as in the proof of Theorem 4 in \cite{RandomMatchingDiscrete}.
\end{proof}

\bibliography{Bibliography.bib}

\begin{thebibliography}{2}
\providecommand{\natexlab}[1]{#1}
\providecommand{\url}[1]{\texttt{#1}}
\expandafter\ifx\csname urlstyle\endcsname\relax
  \providecommand{\doi}[1]{doi: #1}\else
  \providecommand{\doi}{doi: \begingroup \urlstyle{rm}\Url}\fi

\bibitem[Biagini et~al.(2023)Biagini, Mazzon, Meyer-Brandis, and
  Oberpriller]{biagini_mazzon_oberpriller_meyer_brandis_2022}
Francesca Biagini, Andrea Mazzon, Thilo Meyer-Brandis, and Katharina
  Oberpriller.
\newblock Liquidity based modeling of asset price bubbles via random matching.
\newblock \emph{Accepted for publication in SIAM Journal on Financial
  Mathematics}, 2023.
\newblock URL \url{http://arxiv.org/abs/2210.13804}.

\bibitem[Duffie et~al.(2018)Duffie, Qiao, and Sun]{RandomMatchingDiscrete}
Darrell Duffie, Lei Qiao, and Yeneng Sun.
\newblock Dynamic directed random matching.
\newblock \emph{Journal of Economic Theory}, 174:\penalty0 124--183, 2018.

\end{thebibliography}
\bibliographystyle{plainnat}
\end{document}